\documentclass[11pt]{article}
\usepackage{graphicx} 

\usepackage{algorithm}
\usepackage[noend]{algpseudocode}
\usepackage[utf8]{inputenc}
\usepackage{amsmath,amssymb,amsthm}
\usepackage{url,hyperref}
\usepackage{fancyvrb}
\usepackage[dvipsnames]{xcolor}
\usepackage{caption}
\usepackage{tcolorbox}
\usepackage{authblk}

\newcommand{\eps}{\epsilon}

\newcommand{\etal}{{\it{et al.}}}

\newcommand{\E}{\mathcal{E}}

\newcommand{\EE}{\mathbb{E}}
\newcommand{\var}{\mathrm{Var}}

\newcommand{\C}{\mathcal{C}}

\newcommand{\RR}{\mathcal{R}}

\newcommand{\wt}[1]{\widetilde{#1}}
\newcommand{\norm}[1]{\left\lVert#1\right\rVert}
\newcommand{\wpp}{\widehat{p}}

\newtheorem{thm}{Theorem}[section]
\newtheorem{lem}{Lemma}[section]
\newtheorem{fact}{Fact}[section]
\newtheorem{defn}{Definition}[section]

\newtheorem{claim}{Claim}[section]

\title{Almost Instance Optimal Sum and Moment Estimation Using Weighted Sampling}
\author{
\begin{minipage}[t]{0.31\textwidth}
    \centering
    Anup Bhattacharya\footnote{National Institute of Science Education and Research, Bhubaneswar, 752050, India} \footnote{Homi Bhabha National Institute, Training School Complex, Anushakti Nagar, Mumbai, 400094, India}\\
    \small\href{mailto:anup@niser.ac.in}{anup@niser.ac.in}
\end{minipage}
\hfill
\begin{minipage}[t]{0.31\textwidth}
    \centering
    Suryendu Mondal\textsuperscript{*\textsuperscript{\dag}}\\
    \small\href{mailto:suryendu.mondal@niser.ac.in}
    {suryendu.mondal@niser.ac.in}
\end{minipage}
\hfill
\begin{minipage}[t]{0.31\textwidth}
    \centering
    Pinki Pradhan\textsuperscript{*\textsuperscript{\dag}}\\
    \small\href{mailto:pinki.pradhan@niser.ac.in}
    {pinki.pradhan@niser.ac.in}
\end{minipage}
}
\date{}
\begin{document}
\maketitle

\begin{abstract} We design an almost instance optimal algorithm for the sum estimation problem using weighted sampling. We show that the sample complexity for sum estimation is closely related to the $\ell_2$ norm of the weighted sampling distribution. We show an almost instance optimal lower bound for this problem as well. We also study the moment estimation problem and design an algorithm that has better instance-wise sample complexity bounds. \end{abstract}

\section{Introduction} Let $A=\{a_1,a_2,\ldots,a_n\}$ be a set of $n$ elements, where element $a_i$ has a non-negative weight $w(a_i)$. Let $W=\sum_{a\in A} w(a)$ be the total weight of elements in $A$. In our computational model, one does not get direct access to the weights of the elements in $A$. Instead, one gets sample access to the elements of $A$ using an oracle. We assume access to an oracle that returns an element $a\in A$ with probability proportional to its weight $w(a)$. That is, the oracle samples element $a_i\in A$ with probability $p_i=\frac{w(a_i)}{W}$. Throughout, we refer to these samples as weighted samples or proportional samples. Our objective is to design an algorithm that computes a {\em good} estimate of $W$ using a sublinear number of queries to the oracle. Next, we formally define the {\em sum estimation} problem.

\begin{defn}($(\eps,\delta)$-sum estimation) Given sample access to a set $A$ with $n$ weighted elements, and parameters $\eps,\delta\in (0,1)$, the objective is to obtain an estimate $\hat{W}$ of $W=\sum_{a\in A} w(a)$ such that $\Pr[|\hat{W}-W|>\eps W]\leq \delta$. \end{defn}

Our objective in this work is to design an algorithm that compues an $(\eps,\delta)$-estimate of $W$ using a small number of queries to the oracle. Beretta and T{\v{e}}tek \cite{BT2024} showed that, in the worst case, obtaining an $(\eps,1/3)$-estimate of $W$ requires $\Theta(\frac{\sqrt{n}}{\eps})$ proportional samples. Their estimator crucially relies on observing a collision in the sample. Since, in the worst case, $\Omega(\sqrt{n})$ samples are needed to observe a collision for the uniform distribution, their algorithm has worst-case sample complexity of $O(\sqrt{n}/\eps)$. They also showed a matching lower bound of $\Omega(\sqrt{n}/\eps)$ samples for any $(\eps,1/3)$-estimate of the sum of the weights. While this sample complexity bound is optimal in the worst case, it might not capture the complexity of the problem on all input instances. In this work, we set out to design algorithms which have better sample complexity bounds on an instance to instance basis. Consider, for example, instances which have a small number of very large weights and the remaining weights are small. With high probability, one would observe a collision in the sampled set with much less number of samples for this instance. Hence, if detecting a collision is the key for an algorithm of this problem to work, then it might require considerably less number of samples compared to the worst-case bound of $\Omega(\sqrt{n}/\eps)$. This leads to our quest for designing an instance optimal algorithm for the sum estimation problem using weighted sampling. Let $p$ denote the proportional (weighted) sampling distribution for $A$. It is known that the probability of observing a collision in the sample is related to the $\ell_2$ norm $\norm{p}_2$ of the sampling distribution $p$ \cite{BC2017,WG2022,DGPP2019}. We ask the following question in this work.

\begin{tcolorbox}[colback=black!5!white,colframe=cyan!50!black,title=Sample complexity for Sum Estimation] Can the sample complexity of the sum estimation problem using proportional sampling be expressed using the $\ell_2$ norm of the proportional sampling distribution $p$? \end{tcolorbox}

We answer this question in the affirmative by designing an algorithm for the sum estimation problem, where the sample complexity depends on the $\ell_2$ norm of $p$. We design an algorithm that produces an $(\eps,\delta)$-estimate of $W$ using $O((\frac{1}{\eps^2 \norm{p}_2}+\frac{\ln 1/\norm{p}_2}{\norm{p}_2})\log 1/\delta)$ samples on an input instance with weighted distribution $p$. More specifically, we prove the following result.

\begin{thm}(Upper bound for Sum Estimation) Given proportional sampling access to a set $A$ with $n$ weighted elements and parameters $0<\eps,\delta< 1$, there exists an algorithm that gives an $(\eps,\delta)$-estimate of the sum of the weights using $O((\frac{1}{\eps^2 \norm{p}_2}+\frac{\ln 1/\norm{p}_2}{\norm{p}_2})\log 1/\delta)$ proportional samples. \end{thm}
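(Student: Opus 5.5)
The plan is to build the estimator from two quantities that both have mean governed by $\norm{p}_2^2$. I assume, as in the model of \cite{BT2024}, that a sample returns the element $a$ together with its weight $w(a)$. Draw $m$ i.i.d.\ samples $x_1,\dots,x_m$ from $p$ and define two statistics.

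The collision count $C=\sum_{j<k}\mathbf{1}[x_j=x_k]$ satisfies $\EE[C]=\binom m2\norm{p}_2^2$.

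The weight average $S=\frac1m\sum_j w(x_j)$ satisfies $\EE[S]=\sum_i p_i w(a_i)=W\norm{p}_2^2$, since $w(a_i)=Wp_i$.

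The estimator is therefore $\hat W=\binom m2 S/C$, and it suffices to show that both $S$ and $C$ are $(1\pm\eps/3)$-accurate relative to their means with constant probability. The final $(\eps,\delta)$ guarantee then follows by taking the median of $O(\log 1/\delta)$ independent repetitions.

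\textbf{Concentration with known $\norm{p}_2$.}

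For $S$: $\var(w(x))=W^2\var(p_x)\le W^2\sum_i p_i^3=W^2\norm{p}_3^3\le W^2\norm{p}_2^3$. Hence the relative variance is $\var(S)/\EE[S]^2\le \norm{p}_2^3/(m\norm{p}_2^4)=1/(m\norm{p}_2)$.

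For $C$: the standard collision-variance bound gives $\var(C)\le \binom m2\norm{p}_2^2+m^3\norm{p}_3^3$. Using $\norm{p}_3^3\le\norm{p}_2^3$, the relative variance is $O\big(1/(m^2\norm{p}_2^2)+1/(m\norm{p}_2)\big)$.

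By Chebyshev, $m=\Theta(1/(\eps^2\norm{p}_2))$ makes both relative errors at most $\eps/3$ with probability at least $0.9$. Combining them gives $|\hat W-W|\le\eps W$. This accounts for the first term of the theorem.

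\textbf{Removing the knowledge of $\norm{p}_2$.} This is the step I expect to be the main obstacle, and it is where the $\frac{\ln(1/\norm{p}_2)}{\norm{p}_2}$ term arises. I would first obtain a constant-factor estimate of $\norm{p}_2$ by a doubling search, run in stages $t=1,2,\dots$ with $m_t=2^t$ samples.

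In stage $t$ we test whether the collision count reaches a constant threshold $\tau$, equivalently whether $\binom{m_t}{2}\norm{p}_2^2\gtrsim\tau$. The Chebyshev bound above, at constant accuracy, makes this test correct with constant probability once $m_t\gtrsim 1/\norm{p}_2$. Since $\norm{p}_2\ge 1/\sqrt n$, the search ends within $T=O(\log(1/\norm{p}_2))$ stages.

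The two failure modes need different handling:
\begin{itemize}
\item Stopping too early (many collisions while $m_t\ll 1/\norm{p}_2$) must be ruled out simultaneously across all stages. Markov's inequality gives $\Pr[C\ge\tau]\le m_t^2\norm{p}_2^2/\tau$, which is geometrically small in the early stages, so the union bound sums to a constant.
\item Not stopping once $m_t\ge c/\norm{p}_2$ is handled by amplifying each stage with a median over $O(\log T+\log 1/\delta)$ repetitions.
\end{itemize}
The total cost is dominated by the last stage, $O(\frac{1}{\norm{p}_2}\log T)$, times the $\log 1/\delta$ amplification. This stays within the claimed $O\big(\frac{\ln(1/\norm{p}_2)}{\norm{p}_2}\log\frac1\delta\big)$ bound.

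With an estimate $\tilde\ell=\Theta(\norm{p}_2)$ in hand, set $m=\Theta(1/(\eps^2\tilde\ell))$ and run the median-of-estimators from the first step. Adding the two phases and union-bounding their failure probabilities gives the theorem.
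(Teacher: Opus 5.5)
Your proof is correct, but it follows a genuinely different route from the paper.

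\textbf{The paper's route.} The paper never forms a collision-ratio estimator. It estimates $\norm{p}_2$ by the first-collision time $|Z|$, in one pass, using the Schulte-Geers--Waggoner tail bounds. It then builds a set $S$ with $w(S)/W=\Omega(\norm{p}_2)$. This $S$ is either a detected heavy hitter, or $\Theta(1/\norm{p}_2)$ fresh samples that are distinct by a birthday argument and carry enough mass by Chebyshev, using $\norm{p}_3^3\le p_{\max}\norm{p}_2^2$. Finally it outputs $w(S)/\hat p$, where $\hat p$ is the hit frequency of $O(1/(\eps^2\norm{p}_2))$ further samples. The $\frac{\ln(1/\norm{p}_2)}{\norm{p}_2}$ term is precisely the cost of the heavy-hitter test (the multiset $V$ of size $\frac{192}{q}\log\frac1q$) that guarantees $p_{\max}\le q$.

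\textbf{What your route changes.} Your ratio $\binom m2 S/C$ removes that phase altogether. The uniform bound $\norm{p}_3^3\le\norm{p}_2^3$ already controls the relative variances of both $S$ and $C$ whatever the heavy elements are, and no distinctness argument is needed. Your extra term comes only from the doubling search and is $O\bigl(\frac{\log\log(1/\norm{p}_2)+\log(1/\delta)}{\norm{p}_2}\bigr)$. So you in fact obtain a slightly stronger bound, and the paper's first-collision estimate could replace your search entirely. The paper's route, in exchange, only needs concentration of a single binomial count in its last phase, rather than two statistics computed from the same sample.

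\textbf{One detail to tighten.} Premature stopping is the harmful direction: it overestimates $\norm{p}_2$ and makes $m$ too small. Your Markov/union-bound argument bounds it only by a constant, which would not give failure probability $O(\delta)$. The per-stage majority vote you already use fixes this. In each too-early stage a single trial reaches the threshold with probability at most $c^2/\tau$, a small constant, so the vote errs with probability $2^{-\Omega(k_t)}$. Since the number of stages is not known in advance, take $k_t=O(\log(t/\delta))$ votes at stage $t$; the union bound over stages is then $O(\delta)$.
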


Next, we show an almost instance optimal lower bound for the sum estimation problem. More specifically, we prove a $\Omega(\frac{1}{\eps \norm{p}_2})$ lower bound for any $(\eps,1/3)$-estimate of $W$, when the input instance $A$ and $\eps$ are restricted in the following sense.

\begin{thm} Let $A=A_1 \uplus A_2 \uplus \ldots \uplus A_t$, where all elements in $A_i$ have weight $w_i$, and let $t$ be a constant. Let $0<\eps\leq 1$ be such that $\eps\geq 1/\sqrt{|A_i|}$ for all $i$. Then, any $(\eps,1/3)$-estimate of $W$ requires $\Omega\left(\frac{1}{\eps \norm{p}_2}\right)$ proportional samples. \end{thm}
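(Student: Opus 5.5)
We reduce to distinguishing two instances. Given $A=A_1\uplus\cdots\uplus A_t$ with class sizes $n_i=|A_i|$ and weights $w_i$, we build a perturbed instance $A'$ with the same weights $w_i$ but class sizes $n_i'=n_i(1+c\eps)$ for a suitable constant $c$, e.g.\ $c=3$. Then $W'=(1+c\eps)W$, so any $(\eps,1/3)$-estimator must tell $A$ from $A'$ with probability at least $2/3$. The sampling distributions are $p_a=w_i/W$ for $a\in A_i$ and $p'_a=w_i/W'$. Because the elements are unlabeled, and the algorithm sees only the sequence of returned identities, the only information in $k$ samples is the collision pattern, i.e.\ which samples coincide. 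Inflating every class by the same factor keeps the mass of each class unchanged, namely $n_iw_i/W$. So the class of a sample carries no information, and only the within-class collision statistics differ. Equivalently, $A'$ looks like $A$ with each class's uniform distribution spread over $(1+c\eps)$ times as many elements. The condition $\eps\ge 1/\sqrt{n_i}$ guarantees $c\eps n_i\ge 1$, so $n_i'$ can be taken to be an integer up to rounding, which costs only constant factors.

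The plan is to show that with $k=o\bigl(1/(\eps\norm{p}_2)\bigr)$ samples the distributions of the collision patterns under $A$ and $A'$ have total variation distance at most $1/3$. The relevant statistic is the number of collisions. Within class $i$, the sampled points behave like $k_i\approx k\,n_iw_i/W$ draws from a uniform distribution on $n_i$ versus $n_i(1+c\eps)$ elements. The expected number of collisions is $\mu_i\approx k_i^2/(2n_i)$ in $A$ and $\mu_i/(1+c\eps)$ in $A'$. Summing over classes, $\mu=\sum_i \mu_i\approx \frac{k^2}{2}\sum_i n_i(w_i/W)^2=\frac{k^2}{2}\norm{p}_2^2$, so the gap between the two means is about $c\eps\mu$.

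To bound the distance, we use the Poissonization/birthday-problem approximation. In the regime $k\ll\sqrt{n_i}$ per class, the collision pattern is dominated by pairwise collisions. The number of collisions is then close to Poisson with mean $\mu$ versus mean $\mu(1-\Theta(\eps))$, and the algorithm loses nothing by seeing only the count, since conditioned on the count the pattern is equidistributed. The total variation distance between $\mathrm{Poi}(\mu)$ and $\mathrm{Poi}(\mu(1-\Theta(\eps)))$ is $O(\eps\sqrt{\mu})=O(\eps k\norm{p}_2)$, which is $o(1)$ when $k=o(1/(\eps\norm{p}_2))$. This gives the claimed bound $\Omega(1/(\eps\norm{p}_2))$. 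Since $t$ is constant, the per-class contributions combine with only constant-factor losses, for instance by bounding the Hellinger distance per class and using subadditivity.

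The main obstacle is making the Poisson step rigorous, because triple collisions and heavy classes break the approximation. Some class may have $k_i\gtrsim\sqrt{n_i}$. There the collision count is large and concentrated, but it is still Poisson-like with variance about $\mu_i$, and the hypothesis $\eps\ge1/\sqrt{n_i}$ is what should control the error terms. I would first try to compute directly the KL or $\chi^2$ divergence between the joint laws of the class histograms (multinomial occupancy) under $n_i$ and $n_i'$ bins. I would bound it by $O(\eps^2 k_i^2/n_i)$ using the known occupancy-distribution estimates. Summing these bounds gives $O(\eps^2k^2\norm{p}_2^2)$, from which the result follows by Pinsker's inequality.
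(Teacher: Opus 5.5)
Your hard instance and your sufficiency reduction are the paper's. Every block is inflated by the same factor, so the block masses $n_ip_i$ are unchanged, and the likelihood ratio depends only on $(m_i,l_i)$, exactly as in Claim~\ref{claim:risk_as_product}. You depart at the last step. You want a divergence bound (Poisson total variation, or per-block KL plus Pinsker), which if completed would give the clean statement $d_{TV}=O(\eps k\norm{p}_2)$. The paper never computes a divergence; it brackets $\RR(F_m)$ on a high-probability event. Your route is viable, but as written it does not prove the theorem. The one estimate that carries all the content is only asserted via ``known occupancy-distribution estimates'': writing $P^{(r)}_N$ for the law of the collision pattern of $r$ uniform draws from $N$ elements, this is $D\bigl(P^{(k_i)}_{n_i}\,\|\,P^{(k_i)}_{n_i'}\bigr)=O(\eps^2k_i^2/n_i)$. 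Two ingredients it needs are missing, and one step is misstated.

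\textbf{The estimate fails unless $k_i=O(n_i)$.} For $k_i\gg n_i$, $l_i$ is pinned near the number of available elements, with variance about $n_ie^{-k_i/n_i}$, while the two means differ by about $c\eps n_i$. So the divergence is far larger than $\eps^2k_i^2/n_i$. In the other direction, $D(P^{(k_i)}_{n_i'}\|P^{(k_i)}_{n_i})=\infty$ once $k_i>n_i$. You must therefore show $k_i=o(n_i)$ with high probability and control the tail of the random $k_i$, for instance by the chain rule plus the data-processing bound $D\le k_i\ln(1+c\eps)$. This is the real role of $\eps\ge 1/\sqrt{n_i}$: since $\norm{p}_2\ge\sqrt{n_i}\,p_i$, you get $\EE[k_i]=kn_ip_i=o\bigl(n_ip_i/(\eps\norm{p}_2)\bigr)\le o(\sqrt{n_i}/\eps)\le o(n_i)$, which is Lemma~\ref{lem:m_i_not_large}. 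You use the hypothesis explicitly only for integrality.

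\textbf{You need fluctuations of $k_i-l_i$ on the scale $k_i/\sqrt{n_i}$, not $\sqrt{k_i}$.} Bounded differences give only $\sqrt{k_i}$, which yields total KL $O(\eps^2k)$ and merely an $\Omega(1/\eps^2)$ bound. Moreover, for mutually absolutely continuous laws $D(P\|Q)=\ln\EE_Pe^{\Lambda-\EE_P\Lambda}$ with $\Lambda=\ln(dQ/dP)$. Here $\Lambda$ is essentially linear in $l_i$ with slope about $c\eps$, so the KL route needs this scale at the level of exponential moments, not just variance. One way is to note that $(k_i-l_i)/2$ is self-bounding, hence sub-Poisson with mean about $k_i^2/(4n_i)$.

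The paper avoids exponential moments entirely. It uses only:
\begin{itemize}
\item the Efron--Stein bound $\var[l_i]\le r^2/n_i$ (Lemma~\ref{lem:var-l});
\item the mean gap $O(\eps r^2/n_i)$ (Lemma~\ref{lem:exp-l}) and Chebyshev;
\item monotonicity of $\RR(l_i)$ with the normalization $\sum_j\RR(j)\Pr[l_i=j\mid\E]\le 1$, which pins $\RR$ near $1$ at the window ends;
\item the slope bound of Lemma~\ref{lem:rel_a_i_b_i}.
\end{itemize}

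\textbf{The Poisson step uses the wrong statistic.} The sufficient statistic is $k_i-l_i$, not the number of colliding pairs. These differ once triple collisions appear ($k_i\gtrsim n_i^{2/3}$, which is inside your regime). Conditioning on the pair count does not give a hypothesis-free law of the pattern, so the Poisson approximation would have to be carried out for $k_i-l_i$ itself.
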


We also construct families of instances for which $\norm{p}_2$ is a constant, and show that for $\eps>1/\sqrt{n}$, one requires $\Omega(1/\eps^2)$ samples for any $(\eps,1/3)$-estimate of the sum. More formally, we prove the following result.

\begin{thm}\label{thm:large-norm} For $\eps>1/\sqrt{n}$, there exist instances with $\norm{p}_2$ being a constant such that any $(\eps,1/3)$ estimate of the sum requires $\Omega(\frac{1}{\eps^2 \norm{p}_2})$ samples. \end{thm}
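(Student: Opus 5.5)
Since $\norm{p}_2$ is a constant here, the claim amounts to an $\Omega(1/\eps^2)$ lower bound. I would prove it by reduction from distinguishing a fair coin from an $\eps$-biased coin, and build two instances whose proportional sampling distributions differ only in the bias of such a coin.

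\textbf{Construction.} Take $A$ with one heavy element $a_1$ and $n-1$ light elements.
\begin{itemize}
\item In instance $I_0$: $w(a_1)=n-1$ and every light element has weight $1$. Then $W_0=2(n-1)$ and $p_1=1/2$.
\item In instance $I_1$: $w(a_1)=n-1$ and every light element has weight $1+c\eps$, for a suitable constant $c$ (say $c=4$). Then $W_1=(n-1)(2+c\eps)$, which differs from $W_0$ by a factor $1+c\eps/2$. An $(\eps,1/3)$-estimator must therefore output values in disjoint intervals on $I_0$ and $I_1$.
\end{itemize}
In both instances $p_1=\Theta(1)$, so $\norm{p}_2\ge p_1=\Theta(1)$; the light elements contribute only $O(1/\sqrt n)$ to the norm.

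\textbf{Reduction.} A sample reveals only the identity of the sampled element. Under $I_0$, $a_1$ appears with probability $1/2$; under $I_1$, with probability $1/(2+c\eps)=1/2-\Theta(\eps)$. Conditioned on not drawing $a_1$, the sampled element is uniform over the light elements in both instances. Hence the transcript is the indicator sequence of $a_1$ together with independent uniform light labels. The sequence of heavy/light indicators is a sufficient statistic, and the labels carry no information.

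\textbf{Lower bound.} For $m$ samples, the KL divergence between $\mathrm{Bern}(1/2)^{\otimes m}$ and $\mathrm{Bern}(1/2-\Theta(\eps))^{\otimes m}$ is $O(m\eps^2)$. By Pinsker's inequality (or the standard coin-bias lower bound), distinguishing the two with success probability $2/3$ needs $m=\Omega(1/\eps^2)=\Omega(1/(\eps^2\norm{p}_2))$.

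\textbf{Role of $\eps>1/\sqrt n$.} If the light elements were allowed to be distinct, an algorithm could estimate their count from collisions using about $\sqrt n$ samples. My construction sidesteps this, since $n$ is fixed and known in both instances. The condition $\eps>1/\sqrt n$ only ensures $1/\eps^2<n$, so the bound is not trivially dominated by the worst-case $\sqrt n/\eps$ regime. It may also be needed to make weights integral after scaling (for example by scaling all weights by $1/\eps$).

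\textbf{Main obstacle.} The step needing care is the claim that the transcripts' distributions differ only through the heavy/light indicator. This requires the conditional distribution over light elements to be identical, namely uniform, in both instances, which the construction guarantees. The rest is the textbook coin-distinguishing bound.
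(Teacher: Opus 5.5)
Your construction fails at the sentence ``A sample reveals only the identity of the sampled element.'' In this model the algorithm can also query the weight of every sampled element. That is why the lower bound works with fingerprints $\{(c_a,w(a))\}$, and the paper's own algorithms rely on it, for instance when computing $w(S)$. Under $I_0$ every light sample reports weight $1$, and under $I_1$ it reports weight $1+c\eps$. Each draw is light with probability about $1/2$, so the first light sample arrives within $O(1)$ draws with constant probability and identifies the instance. Since $n$ and the heavy weight $n-1$ are known, that sample even gives $W$ exactly. So the heavy/light indicator sequence is not a sufficient statistic, and your two instances are distinguishable with $O(1)$ samples. An identity-only model cannot be the intended one either: there sum estimation is impossible with any number of samples, because scaling all weights leaves $p$ unchanged.

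The missing idea is that both instances must show exactly the same observable weights. The shift in heavy/light frequency therefore has to come from changing the \emph{number} of light elements, not their weight. The paper does this as follows.
\begin{itemize}
\item $I$ has one element of weight $c_1W$ and $n-1$ elements of weight $w_2$, with $(n-1)w_2=c_2W$.
\item $I'$ adds $(\eps/c_2)(n-1)$ further elements of weight $w_2$, so that $W'=(1+\eps)W$.
\end{itemize}
This opens a second information channel that your construction never had to face: collisions among light samples now reveal the light count. The paper handles it by splitting $\RR(F_m)=\RR(X)\cdot\RR(F_X\mid X)$, where $X$ is the number of light samples.
\begin{itemize}
\item $\RR(X)$ is exactly your coin-bias bound, Lemma~\ref{lem:44BT2024}, valid for $m=o(1/\eps^2)$.
\item $\RR(F_X\mid X)$ is a uniform-instance problem with $n-1$ versus $(1+\eps/c_2)(n-1)$ elements. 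It is handled by Lemma~\ref{lem:43BT2024}, which requires $X=o(\sqrt{n}/\eps)$.
\end{itemize}
That requirement is where $\eps>1/\sqrt n$ enters. It gives $1/\eps^2\le\sqrt n/\eps$, so on the Markov event $X\le 30\,\EE[X]=o(1/\eps^2)$ one also has $X=o(\sqrt n/\eps)$. In your proposal this hypothesis plays no real role, which is itself a sign that something is being missed.
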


Next, we discuss our results for the moment estimation problem. We first formally define the $(\eps,\delta)$-moment estimation problem below.

\begin{defn}($(\eps,\delta)$-moment estimation) Given sample access to a set $A$ with $n$ weighted elements, and parameters $\eps,\delta\in (0,1)$, an integer $h>1$, the objective of the $(\eps,\delta)$-moment estimation problem is to obtain an $(\eps,\delta)$-estimate of $W_h(A)=\sum_{a\in A} w(a)^h$. \end{defn}

Motivated by our result for the sum estimation problem for which the sample complexity can be expressed as a function of the $\ell_2$ norm of sampling distribution $p$, we ask whether a similar result can be obtained for the moment estimation problem. We state the following question that we explore in our work.

\begin{tcolorbox}[colback=black!5!white,colframe=brown!75!black,title=Sample complexity for Moment Estimation] Can the sample complexity for the moment estimation problem using proportional sampling be expressed using the $\ell_h$ norm of the proportional sampling distribution $p$? \end{tcolorbox}

We give a positive answer to this question and state the algorithmic result we obtain for this problem below.
\begin{thm}(Upper bound for moment estimation) There exists an algorithm that given proportional sampling access to a set $A$ with $n$ weighted elements and parameters $\eps,\delta\in (0,1)$ and $h>1$, returns an $(\eps,\delta)$-estimate of $W_h(A)$ using $O((\frac{h}{\norm{p}_h}+\frac{1}{\eps^2 \norm{p}_h}+\frac{1}{\eps^2 \norm{p}_2}+\frac{\ln 1/\norm{p}_2}{\norm{p}_2})\log 1/\delta)$ samples. \end{thm}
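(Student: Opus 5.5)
The plan is to factor the moment as $W_h(A)=W^h\cdot\sum_i p_i^h = W^h\norm{p}_h^h$. It then suffices to produce an $(\eps/(2h),\delta/2)$-estimate of $W$ and an $(\eps/2,\delta/2)$-estimate of $\norm{p}_h^h$, and multiply them. The factor $h$ in the accuracy for $W$ is needed because a $(1\pm\eps')$ error in $W$ becomes a $(1\pm\eps')^h$ error in $W^h$.

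Running the sum-estimation algorithm of Theorem 1 at accuracy $\eps/h$ would cost $h^2/(\eps^2\norm{p}_2)$. That is worse than the claimed bound, so I would not estimate $W$ and $\norm{p}_h^h$ independently. Instead I would estimate $W_h$ directly with a collision-type statistic. Draw $m$ weighted samples and pick an estimator of the form
\[
\hat{W}_h \;=\; \hat{W}\cdot \frac{\#\{h\text{-wise collisions}\}}{\binom{m}{h}}.
\]
Here $\hat W$ comes from Theorem 1 run at constant accuracy $\eps$, and the collision count is an unbiased estimator of $\norm{p}_h^h$. A cleaner alternative uses the identity $W_h = W\cdot \sum_i p_i\, w_i^{h-1}$: estimate $\EE_{a\sim p}[w(a)^{h-1}]$ by sampling $a$ and then estimating $w(a)=p_a W$. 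The quantity $p_a$ can in turn be estimated from how often $a$ recurs.

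Concretely I would proceed in four steps.
\begin{enumerate}
\item Obtain an $(\eps,\delta)$-estimate $\hat W$ from Theorem 1. This costs the $\frac{1}{\eps^2\norm{p}_2}+\frac{\ln(1/\norm{p}_2)}{\norm{p}_2}$ terms.
\item Estimate $\norm{p}_h^h$ via the number $C_h$ of $h$-tuples of equal samples among $m$ samples, using $\EE[C_h]=\binom{m}{h}\norm{p}_h^h$. I would bound the variance by the standard overlapping-tuple expansion, keeping only the terms that actually contribute. The target is $\var[C_h]/\EE[C_h]^2 = O(\eps^2)$ once $m \gtrsim h/\norm{p}_h + 1/(\eps^2\norm{p}_h)$. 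Indeed, once $m\norm{p}_h \ge h$, one expects $\Theta(1)$ $h$-collisions per relevant heavy element, and the $1/\eps^2$ term gives concentration.
\item Combine $\hat W^h$ with the estimate of $\norm{p}_h^h$ so that the error in $\hat W$ enters only linearly, not $h$-fold. This is the delicate point. I would re-express the estimator as $\hat W\cdot \widehat{\sum_i p_i w_i^{h-1}}$, scaling the counts by $\hat W$ only once, or else argue that the estimator from Theorem 1 is sharp enough that $h$ does not multiply into the $\norm{p}_2$ terms.
\item Boost the success probability by the median trick, which contributes the factor $\log(1/\delta)$.
\end{enumerate}

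I expect the main obstacle to be the variance analysis of the $h$-wise collision statistic. The relevant terms are of the form $\binom{m}{2h-j}\sum_i p_i^{2h-j}$, and each must be bounded relative to $(m^h\norm{p}_h^h)^2$. This requires monotonicity of $\ell_q$ norms, namely $\norm{p}_{2h-j}\le\norm{p}_h$, to reduce every term to a power of $m\norm{p}_h$. I would first check that the worst term is $j=h$, which gives $1/(m^h\norm{p}_h^h)$. That is small exactly when $m\gtrsim 1/(\eps^{2/h}\norm{p}_h)$, which is below the stated bound. The additive $h/\norm{p}_h$ term would absorb the binomial-coefficient constants $\binom{h}{j}$.
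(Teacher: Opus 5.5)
There is a genuine gap, and it sits exactly at the step you call delicate. The missing observation is that in this model the weight of each sampled element is revealed. Algorithm~\ref{alg:estsum1} already uses $w(S)=\sum_{a\in S}w(a)$, and without this $W$ would not even be identifiable, since rescaling all weights leaves $p$ unchanged. So your ``cleaner alternative'' identity $W_h(A)=W\sum_i p_i w_i^{h-1}$ can be used with no estimation of $w(a)$ at all, and that is the paper's proof. With $\wt W$ an $(\eps/2,1/8)$-estimate of $W$ from Algorithm~\ref{alg:estsum1}, the paper averages $X_j=\wt W\,w(b_j)^{h-1}$ over $m$ fresh samples. Then $\EE[X_j]=(\wt W/W)\,W_h(A)$, so the error of $\wt W$ enters only linearly and constant accuracy on the sum suffices. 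Moreover $\EE[X_j^2]=\wt W^2\sum_a w(a)^{2h-1}/W\le(1+\eps/2)^2\,W_h(A)^2/\norm{p}_h$ by $\norm{w}_{2h-1}\le\norm{w}_h$, so Chebyshev needs only $m=O(1/(\eps^2\norm{p}_h))$. The $h$-wise collisions serve only to choose $m$ without knowing $\norm{p}_h$, a point your plan never addresses. The time $L$ of the first $h$-way collision satisfies $L=\Theta(h/\norm{p}_h)$ with probability $7/8$, and one sets $m=\Theta(L/(h\eps^2))$. Your steps~1 and~4 match the paper.

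Your route as written does not close, for four reasons.
\begin{itemize}
\item The displayed estimator $\hat W\cdot C_h/\binom{m}{h}$ estimates $W\norm{p}_h^h=W_h(A)/W^{h-1}$, not $W_h(A)$. The correct product $\hat W^h C_h/\binom{m}{h}$ restores the $h$-fold amplification you wanted to avoid.
\item Your alternative has the same problem once $w(a)^{h-1}$ is replaced by $(\hat p_a\hat W)^{h-1}$, since the estimator becomes $\hat W^h$ times an average. Besides, estimating $p_a$ from recurrences costs $\Omega(1/p_a)$ samples; for the uniform distribution that is $n$, versus $1/\norm{p}_h=n^{1-1/h}$.
\item The second option in step~3 is unavailable: a $(1\pm\eps)$ guarantee on $\hat W$ gives nothing better than $(1\pm\eps)^h$ on $\hat W^h$.
\item The variance claim in step~2 fails, because $j=h$ is not the worst overlap term. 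The single-overlap term $j=1$ is about $\frac{h^2}{m}\bigl(\norm{p}_{2h-1}^{2h-1}\norm{p}_h^{-2h}-1\bigr)$. This is of order $h^2/m$ when one element carries half the mass, so that $\norm{p}_h\approx 1/2$. On such instances $C_h$ needs $m\gtrsim h^2/\eps^2$, a factor $h^2$ above the $1/(\eps^2\norm{p}_h)$ term of the statement.
\end{itemize}
So the collision-count route loses a factor $h^2$ on both the $\norm{p}_h$ and the $\norm{p}_2$ terms, which is harmless only when $h=O(1)$.
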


\section{Technical overview}

The main idea behind our sum estimation algorithm is to find a subset $S\subset A$ for which $w(S)/W$ is sufficiently large such that the ratio $w(S)/W$ can be subsequently estimated using a small number of independent samples to obtain an estimate $\hat{W}$ for $W$. We show that this idea almost works out when $|S|=O(1/\norm{p}_2)$. We first obtain an estimate $T$ for the $\ell_2$ norm $\norm{p}_2$ of the weighted sampling distribution $p$, and then independently sample $S$ of $O(1/T)$ elements. We still need to take care of two issues. First, if the variance of the weights $w_i$ is large, then a sample $S$ of size $O(1/T)$ might not result in a large $w(S)/W$. To overcome this, we proceed as follows. First, we check whether there exists a {\em heavy hitter} element $a_i\in A$ for which the sampling probability $p_i$ is larger than some threshold. If such an element $a_i$ exists, we set $S=\{a_i\}$. Else, the variance of the weights cannot be too large and we show that a sampled set $S$ of size $O(1/T)$ would have large $w(S)/W$ with high probability. Second, we estimate the ratio $w(S)/W$ by obtaining independent samples and computing the fraction of sampled elements that belong to $S$. One subtle issue here is that the estimate will be wrong if the sample $S$ contains duplicates. We resolve this by choosing the sample size $O(1/T)$ such that with high probability $S$ does not contain any repeated elements.

For the moment estimation problem, Bhattacharya and Pradhan \cite{BP2025} have given an $(\eps,\delta)$-estimate for the $h$-th moment $W_h(A)$ of $A$ using $O((\sqrt{n}/\eps+\frac{n^{1-1/h}}{\eps^2})\ln 1/\delta)$ proportional samples, for any $h>1$. We note that the first term in the above sample complexity bound appears due to the sum estimation algorithm of \cite{BT2024} that the authors use as a subroutine. We also observe that the second term can be expressed using the $\ell_h$ norm $\norm{p}_h$ of the sampling distribution $p$. In our algorithm for the moment estimation problem, we first estimate the $\ell_h$ norm $\norm{p}_h$ of $p$. We do this by sampling elements until we observe the first $h$-wise collision. We show that this gives a good estimate of $\norm{p}_h$, and we run the algorithm of Bhattacharya and Pradhan with this estimate. We use our sum estimation algorithm as a subroutine in this algorithm in place of the sum estimation algorithm of \cite{BT2024}.

We prove a lower bound on the number of samples required to obtain an $(\eps,1/3)$-estimate the sum of the weights of $A$ using proportional sampling. We adapt the techniques of Beretta and T{\v{e}}tek \cite{BT2024} to obtain an almost instance optimal lower bound for the sum estimation problem. However, we have to face significant difficulties in extending the worst-case lower bound of \cite{BT2024} to an almost instance optimal lower bound. First, the construction of the hard instances differs substantially from that of \cite{BT2024}. Beretta and T{\v{e}}tek \cite{BT2024} showed that an instance with uniform weights is a worst-case instance for the sum estimation problem. Since, we aim for an instance optimal lower bound, for every input instance $I$, we need to construct an almost similar instance $I'$ with $W(I')\geq (1+\eps)W(I)$ such that it is difficult to distinguish the two instances while using a small number of samples. Beretta and T{\v{e}}tek computed the ratio of the conditional probabilities of observing a sample from $I$ and $I'$, and show that the ratio is a function of the number of distinct elements in the sample. Using concentration bounds on the number of distinct elements in the sample, they argue that this ratio remains close to one when the number of samples is $o(\sqrt{n}/\eps)$, and show that even the Bayes optimal classifier cannot distinguish the instances when this ratio is close to one. We group elements of $A$ into $t$ blocks $A_1,\ldots,A_t$, where all elements in a block have the same weight. We compute the ratio of the conditional probabilities of observing a sample from $I$ and $I'$, and show that this ratio depends on the number of distinct elements observed from each block. We show that the number of distinct elements from each block is concentrated around its expected value and argue that the ratio remains close to one when the number of samples is $o(\frac{1}{\eps \norm{p}_2})$. Since the ratio remains close to one, even the Bayes optimal classifier cannot distinguish whether the samples are from $I$ or $I'$ with probability at least $2/3$. We require the number $t$ of blocks to be a constant and $\eps\geq 1/\sqrt{|A_i|}$ for all $i$ for this analysis to work.

\section{Related works} Motwani~\etal~\cite{MPX2007} initiated the study of the sum estimation problem using weighted sampling and designed an algorithm that uses $\tilde{O}(\sqrt{n}/\eps^{7/2})$ proportional samples. Recently, Beretta and T{\v{e}}tek \cite{BT2024} designed an worst-case optimal algorithm for the $(\eps,1/3)$-sum estimation problem that uses $\Theta(\sqrt{n}/\eps)$ samples. For the moment estimation problem, Bhattacharya and Pradhan \cite{BP2025} designed an algorithm for $(\eps,\delta)$-moment estimation using $O(\frac{n^{1-1/h}\ln 1/\delta}{\eps^2})$ samples for any $h>1$. There have been a lot of work on graph parameter estimation problems in the query model. We cite only a few relevant works here \cite{ERS2018,ERS2019,BCS2026}.

\section{Preliminaries}

\begin{fact}\label{fact:norms} For any vector $x\in \C^n$, and for any $0<r<p$, we have $||x||_p\leq ||x||_r\leq n^{(1/r-1/p)} ||x||_p$. \end{fact}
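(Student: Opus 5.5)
The statement is Fact~\ref{fact:norms}: for $x\in\C^n$ and $0<r<p$, $\norm{x}_p\leq\norm{x}_r\leq n^{1/r-1/p}\norm{x}_p$. I prove the two inequalities separately, using only elementary tools: homogeneity of the norms and H\"older's inequality. Throughout, $x\neq 0$, since otherwise all quantities vanish.

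\emph{Lower inequality.} Both sides are homogeneous of degree one in $x$, so I may normalize $\norm{x}_r=1$. Then $|x_i|\leq 1$ for every $i$. Since $p>r$, this gives $|x_i|^p\leq |x_i|^r$ coordinatewise. Summing yields $\norm{x}_p^p\leq \norm{x}_r^r=1$, hence $\norm{x}_p\leq 1=\norm{x}_r$. Undoing the normalization gives $\norm{x}_p\leq\norm{x}_r$ in general.

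\emph{Upper inequality.} I apply H\"older's inequality to $\sum_i |x_i|^r\cdot 1$ with conjugate exponents $s=p/r>1$ and $s'=p/(p-r)$:
\[
\sum_{i=1}^n |x_i|^r\leq \Bigl(\sum_{i=1}^n |x_i|^{p}\Bigr)^{r/p}\Bigl(\sum_{i=1}^n 1\Bigr)^{1-r/p}=\norm{x}_p^{r}\, n^{1-r/p}.
\]
Taking $r$-th roots gives $\norm{x}_r\leq n^{1/r-1/p}\norm{x}_p$. An alternative is Jensen's inequality for the convex map $t\mapsto t^{p/r}$ applied to the uniform average of the $|x_i|^r$; it produces the same bound.

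There is no real obstacle here. The only points needing care are the choice of H\"older exponents and checking that $s=p/r>1$, which holds because $r<p$. The argument works for $0<r<1$ as well, because it never uses the triangle inequality for $\norm{\cdot}_r$.
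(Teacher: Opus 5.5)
Your proof is correct. Normalizing to $\norm{x}_r=1$ and comparing $|x_i|^p\leq|x_i|^r$ coordinatewise gives the left inequality. H\"older's inequality with exponents $p/r$ and $p/(p-r)$ gives the factor $n^{1/r-1/p}$.

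The paper states this as a standard fact without proof, so there is no argument of its own to compare against. It only ever uses the left inequality, with $r=h$ and $p=2h-1$, in the variance bound for the moment estimator.
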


\begin{lem}(Chernoff bound \cite{MU2017})\label{lem:chernoff} Let $Z_1,Z_2,\ldots,Z_v$ be independent and identically distributed Bernoulli random variables and $0<\gamma<1$ and $\eta>0$. Let $Z=\sum_{i=1}^v Z_i$. Then, $\Pr[Z\leq(1-\gamma) \EE[Z]]\leq e^{-1/2 \cdot \gamma^2 \cdot \EE[Z]}$ and $\Pr[Z\geq (1+\eta)\EE[Z]]\leq \left(\frac{e^{\eta}}{(1+\eta)^{(1+\eta)}}\right)^{\EE[Z]}$.Also, for $R\geq 2e \EE[Z]$, we have $\Pr[Z\geq R]\leq 2^{-R}$. \end{lem}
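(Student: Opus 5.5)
The statement is Lemma~\ref{lem:chernoff}, the standard multiplicative Chernoff bounds for a sum $Z$ of independent Bernoulli variables. I would prove all three parts with the moment generating function method. The Bernoulli variables need only be independent, not identically distributed. Write $\mu=\EE[Z]$ and $q_i=\Pr[Z_i=1]$. The key estimate is that for every real $s$,
\[
\EE[e^{sZ_i}]=1+q_i(e^s-1)\le e^{q_i(e^s-1)},
\]
using $1+x\le e^x$. By independence, $\EE[e^{sZ}]=\prod_i \EE[e^{sZ_i}]\le e^{\mu(e^s-1)}$. Each tail bound then follows from Markov's inequality applied to $e^{sZ}$ with a suitable choice of $s$.

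\textbf{Upper tail.} Take $s=\ln(1+\eta)>0$. Markov's inequality gives
\[
\Pr[Z\ge (1+\eta)\mu]\le \frac{\EE[e^{sZ}]}{e^{s(1+\eta)\mu}}\le \frac{e^{\mu\eta}}{(1+\eta)^{(1+\eta)\mu}}=\left(\frac{e^{\eta}}{(1+\eta)^{1+\eta}}\right)^{\mu},
\]
which is exactly the stated bound.

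\textbf{Lower tail.} Apply Markov's inequality to $e^{-sZ}$ with $s=-\ln(1-\gamma)>0$. This gives
\[
\Pr[Z\le(1-\gamma)\mu]\le \left(\frac{e^{-\gamma}}{(1-\gamma)^{1-\gamma}}\right)^{\mu}.
\]
It then suffices to show $-\gamma-(1-\gamma)\ln(1-\gamma)\le -\gamma^2/2$ for $0<\gamma<1$. Expand $\ln(1-\gamma)=-\sum_{k\ge1}\gamma^k/k$. Then
\[
(1-\gamma)\ln(1-\gamma)=-\gamma+\sum_{k\ge2}\frac{\gamma^k}{k(k-1)}\ge -\gamma+\frac{\gamma^2}{2},
\]
since every term of the remaining series is nonnegative. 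Hence $-\gamma-(1-\gamma)\ln(1-\gamma)\le-\gamma^2/2$, which is the required inequality.

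\textbf{The $2^{-R}$ bound.} Let $R\ge 2e\mu$ and set $1+\eta=R/\mu\ge 2e$. The upper-tail bound gives
\[
\Pr[Z\ge R]\le \frac{e^{\eta\mu}}{(1+\eta)^{R}}\le \frac{e^{R}}{(2e)^R}=2^{-R},
\]
using $e^{\eta\mu}\le e^{(1+\eta)\mu}=e^R$. If $\mu=0$, then $Z=0$ almost surely and the claim holds trivially for $R>0$.

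The only step needing care is the elementary inequality in the lower tail. The series argument above handles it; alternatively one can set $f(\gamma)=-\gamma-(1-\gamma)\ln(1-\gamma)+\gamma^2/2$, note $f(0)=0$, and check $f'(\gamma)=\ln(1-\gamma)+\gamma\le0$.
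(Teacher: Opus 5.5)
Your argument is correct. It is the standard moment-generating-function proof, with the right choices $s=\ln(1+\eta)$ and $s=-\ln(1-\gamma)$, a valid check of $-\gamma-(1-\gamma)\ln(1-\gamma)\le-\gamma^2/2$, and a sound reduction of the $2^{-R}$ bound to the upper tail, including the degenerate case $\EE[Z]=0$. The paper gives no proof of its own and only cites Mitzenmacher--Upfal, whose proof is this same argument; they state the last part for $R\ge 6\EE[Z]$, and your computation shows the paper's threshold $2e\EE[Z]$ already suffices.
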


\begin{thm}(Efron-Stein Inequality \cite{BLM2013})\label{thm:es} Let $X_1,\ldots,X_n$ be independent random variables and let $X'_1,\ldots,X'_n$ be an independent copy of $X_1,\ldots,X_n$. Let $Z=f(X_1,\ldots,X_n)$ be a square-integrable function of $X$. Let us define $Z'_i=f(X_1,\ldots,X_{i-1},X'_i,X_{i+1},\ldots,X_n)$ for all $i$. Then, $\var[Z]\leq 1/2 \cdot \sum_{i=1}^n \EE[(Z-Z'_i)^2]$.\end{thm}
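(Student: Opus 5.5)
The plan is the standard martingale-difference argument. It splits $\var[Z]$ into contributions from the individual coordinates, bounds each contribution by a conditional variance in one coordinate, and then rewrites that conditional variance using the independent copy $X'_i$.

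\textbf{Step 1: martingale decomposition.} For $0\le i\le n$, write $\EE_i[\cdot]=\EE[\cdot\mid X_1,\ldots,X_i]$, so that $\EE_0=\EE$ and $\EE_n[Z]=Z$. Set $\Delta_i=\EE_i[Z]-\EE_{i-1}[Z]$. Then $Z-\EE[Z]=\sum_{i=1}^n\Delta_i$. For $j>i$, the variable $\Delta_i$ is measurable with respect to $X_1,\ldots,X_{j-1}$, and $\EE_{j-1}[\Delta_j]=0$. Hence $\EE[\Delta_i\Delta_j]=0$, and square-integrability of $Z$ justifies all the expectations. Therefore
\[
\var[Z]=\sum_{i=1}^n\EE[\Delta_i^2].
\]

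\textbf{Step 2: reduce each $\Delta_i$ to a one-coordinate fluctuation.} Let $\EE^{(i)}[\cdot]$ denote expectation over $X_i$ only, with all other coordinates fixed; that is, conditioning on $X^{(i)}=(X_1,\ldots,X_{i-1},X_{i+1},\ldots,X_n)$. Because the $X_j$ are independent, integrating out $X_i$ first and then $X_{i+1},\ldots,X_n$ gives
\[
\EE_i\bigl[\EE^{(i)}[Z]\bigr]=\EE_{i-1}[Z].
\]
This is the one point where independence is genuinely needed, and it is the step I would check most carefully. It yields
\[
\Delta_i=\EE_i\bigl[Z-\EE^{(i)}[Z]\bigr].
\]
Conditional Jensen then gives $\EE[\Delta_i^2]\le \EE\bigl[(Z-\EE^{(i)}[Z])^2\bigr]=\EE\bigl[\var^{(i)}(Z)\bigr]$, where $\var^{(i)}$ is the variance in $X_i$ alone.

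\textbf{Step 3: introduce the independent copy.} Conditionally on $X^{(i)}$, the variables $Z$ and $Z'_i$ are i.i.d., since they are the same function applied to the independent draws $X_i$ and $X'_i$. For i.i.d. $U,V$ we have $\var(U)=\tfrac12\EE[(U-V)^2]$. Applying this conditionally gives
\[
\var^{(i)}(Z)=\tfrac12\,\EE^{(i)}\bigl[(Z-Z'_i)^2\bigr].
\]
Taking expectations and summing over $i$ using Steps 1 and 2 yields $\var[Z]\le \tfrac12\sum_{i=1}^n\EE[(Z-Z'_i)^2]$, as claimed.
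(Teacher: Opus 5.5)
Your argument is correct and complete: the martingale decomposition $\var[Z]=\sum_i\EE[\Delta_i^2]$, the Fubini identity $\EE_{i-1}[Z]=\EE_i\bigl[\EE^{(i)}[Z]\bigr]$ followed by conditional Jensen, and the conditional symmetrization $\var^{(i)}(Z)=\tfrac12\EE^{(i)}[(Z-Z'_i)^2]$ (where $\EE^{(i)}$ now also averages over $X'_i$) together form the standard proof given in Boucheron--Lugosi--Massart. The paper itself does not prove this statement and only quotes it from that reference, so your proposal follows the proof in the cited source.
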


Suppose we have an $(\eps,1/3)$-estimator for $S_t(A)$ which uses $T$ samples. There is a standard technique called the \emph{median of means} (see Theorem 2.1 of \cite{AMS1999}) to reduce the failure probability from $1/3$ to any desired value $\delta \in (0,1)$, using $O(T \cdot \ln \frac{1}{\delta})$ samples. For the sake of completeness, we present the technique and its proof below.

\begin{lem}(Median of means)\label{lem:mom} Let $Y_1, Y_2, \ldots, Y_v$ denote the outputs of $v$ independent runs of an $(\eps, 1/3)$-estimator for $S_t(A)$, where $v = 48 \cdot \ln(1/\delta)$. Let $Y_{med}$ denote the median of $Y_1,\ldots,Y_v$. Then, $Y_{med}$ is an $(\eps,\delta)$-estimate of $S_t(A)$. \end{lem}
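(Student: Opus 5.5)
The plan is to reduce the claim to a single Chernoff bound on the number of runs that fail. For each run $j\in\{1,\ldots,v\}$ define the indicator
\[
Z_j=\mathbf{1}\bigl[\,|Y_j-S_t(A)|>\eps S_t(A)\,\bigr].
\]
Each $Y_j$ is an $(\eps,1/3)$-estimate, so $\Pr[Z_j=1]\le 1/3$. The runs are independent, so the $Z_j$ are independent Bernoulli variables. Let $Z=\sum_j Z_j$. The first observation is deterministic: if $Y_{med}$ lies outside $[(1-\eps)S_t(A),(1+\eps)S_t(A)]$, then at least half of the $Y_j$ lie outside this interval, on the same side as the median. Hence the failure event for $Y_{med}$ is contained in $\{Z\ge v/2\}$.

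Next I bound $\Pr[Z\ge v/2]$ using the upper-tail part of Lemma~\ref{lem:chernoff}. One technicality is that the $Z_j$ need not be identically distributed and may have mean strictly below $1/3$. I handle this by stochastic domination. Couple each $Z_j$ with a Bernoulli$(1/3)$ variable $Z'_j\ge Z_j$, which can be done by an independent extra coin flip. Then $Z\le Z'=\sum_j Z'_j$, so it suffices to bound $\Pr[Z'\ge v/2]$, where $\EE[Z']=v/3$. Writing $v/2=(1+\eta)\,v/3$ gives $\eta=1/2$. The lemma then yields
\[
\Pr[Z'\ge v/2]\le \left(\frac{e^{1/2}}{(3/2)^{3/2}}\right)^{v/3}=e^{-c v}
\]
for an absolute constant $c>0$. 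Computing, $c=\tfrac13\bigl(\tfrac32\ln\tfrac32-\tfrac12\bigr)\approx 0.036$.

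The final step is to plug in $v=48\ln(1/\delta)$ and check that $e^{-cv}\le\delta$. This requires $48c\ge 1$. I expect this constant check to be the main obstacle: it gives $48c\approx 1.73$, so it holds, but without much slack, and the value of $c$ must be computed correctly. If the multiplicative bound turns out to be too weak for the constant $48$, the first fallback is Hoeffding's inequality, which gives $\Pr[Z'-v/3\ge v/6]\le e^{-2v/36}=e^{-v/18}$. Since $48/18>1$, this also yields failure probability at most $\delta$. Either route shows that $Y_{med}$ is an $(\eps,\delta)$-estimate.
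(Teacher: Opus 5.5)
Your proof is correct and follows essentially the same route as the paper: per-run indicators, the observation that a failed median forces at least half the runs to fail, and one Chernoff bound. The paper counts successes instead and applies the lower-tail bound with $\gamma=1/4$ and $\EE[Z]\ge 2v/3$, which gives exactly $e^{-v/48}$ and explains the constant $48$; your upper-tail bound on failures gives the stronger $e^{-0.036v}$, and your domination step handles the strictly-below-$1/3$ failure probability that the paper passes over implicitly.
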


\begin{proof} Since $Y_i$ is an $(\eps,1/3)$-estimate of $S_t(A)$, with probability at least $2/3$, we have $Y_i \in (1 \pm \eps) S_t(A)$. Let $Z_i = 1$ if $Y_i \in (1 \pm \eps) S_t(A)$, and $Z_i = 0$ otherwise. Let $Z = \sum_{i=1}^v Z_i$. Then, $\EE[Z] \geq (2/3) \cdot v$. If $Y_{med} \notin (1 \pm \eps) S_t(A)$, at most $v/2$ of the $Y_i$'s belong to $(1 \pm \eps) S_t(A)$, i.e., $Z \leq v/2$. Using Lemma \ref{lem:chernoff}, we get $\Pr[Y_{med} \notin (1 \pm \eps) S_t(A)] \leq \Pr[Z \leq v/2] \leq e^{-v/48}$. For $v = 48 \cdot \ln(1/\delta)$, this probability is at most $\delta$.
\end{proof}

Below, we state two lemma from Beretta and T{\v{e}}tek \cite{BT2024} that we use in our lower bounds.

\begin{lem}(Lemma 4.3 of \cite{BT2024})\label{lem:43BT2024} Let $\eps<1/3$. Let $N=n$ with probability $1/2$ and $N=(1-\eps)n$ otherwise. Consider a random instance of the sum estimation problem $(U,w)$ with $|U|=N$ and $w(a)=1$ for $a\in U$. Consider a sample of size $m$ and its fingerprint $F_m=\{(c_i,w(a_i))\}_{i=1,\ldots,l}$ and define the ratio $\RR(F_m)=\frac{\Pr[F_m|N=n]}{\Pr[F_m|N=(1-\eps)n]}$. If $m=o(\sqrt{n}/\eps)$ and $m=o(n)$, then $\Pr[98/100\leq \RR(F_m)\leq 100/98]\geq 99/100$ for large enough $n$. \end{lem}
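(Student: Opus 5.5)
The plan is to write $\RR(F_m)$ in closed form as a function of the number $k$ of distinct elements in the sample. Then I show that $\ln \RR(F_m)$ is $o(1)$ whenever $k$ is close to its mean, and that $k$ is concentrated under either value of $N$.

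\textbf{Step 1 (closed form).} All weights equal $1$, so the fingerprint is determined by the multiset of counts $c_1,\ldots,c_k$ with $\sum c_i=m$. For a universe of size $N$,
$\Pr[F_m\mid N]=C(F_m)\cdot \frac{(N)_k}{N^m}$, where $(N)_k=N(N-1)\cdots(N-k+1)$ and the combinatorial factor $C(F_m)$ does not depend on $N$. Hence
\[
\RR(F_m)=(1-\eps)^m\prod_{j=0}^{k-1}\frac{n-j}{(1-\eps)n-j}
=(1-\eps)^{m-k}\prod_{j=0}^{k-1}\frac{1-j/n}{1-j/((1-\eps)n)} .
\]
Taking logarithms and expanding (valid since $k\le m=o(n)$), I get
\[
\ln\RR(F_m)=(m-k)\ln(1-\eps)+\Bigl(\tfrac{1}{1-\eps}-1\Bigr)\tfrac{k(k-1)}{2n}+O\!\Bigl(\tfrac{\eps k^3}{n^2}\Bigr)
=-\eps(m-k)+\eps\tfrac{k^2}{2n}+O\!\Bigl(\eps^2 (m-k)+\tfrac{\eps^2k^2}{n}+\tfrac{\eps k^3}{n^2}\Bigr).
\]
Using $k\le m$, the term $\eps k^3/n^2$ is $o(\eps m^2/n)=o(1)$, and $\eps^2k^2/n\le \eps^2m^2/n=o(1)$.

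\textbf{Step 2 (the mean of $m-k$).} The number of repeats $D=m-k$ satisfies $\EE[D]=m-N\bigl(1-(1-1/N)^m\bigr)=\frac{m^2}{2N}+O(m^3/N^2)$. For $N\in\{n,(1-\eps)n\}$ this is $\frac{m^2}{2n}(1+O(\eps))+o(m^2/n)$. Substituting $D\approx m^2/(2n)$ and $k\approx m$, the two main terms $-\eps D$ and $\eps k^2/(2n)$ cancel up to $O(\eps^2 m^2/n)+o(\eps m^2/n)$. This is $o(1)$ because $m=o(\sqrt n/\eps)$. Replacing $k^2$ by $m^2$ costs only $\eps\cdot mD/n$, which is negligible once $D=O(m^2/n)$.

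\textbf{Step 3 (concentration), the main obstacle.} I must show that $|D-\EE[D]|\cdot\eps=o(1)$ with probability at least $99/100$ under both hypotheses. The crude bound $\var[k]\le m$ is too weak. Instead I apply the Efron--Stein inequality (Theorem~\ref{thm:es}) to $D$ as a function of the $m$ independent samples. Resampling the $i$-th draw changes $D$ by at most $1$, and it changes $D$ at all only if the old or the new value of draw $i$ coincides with some other draw. That event has probability at most $2m/N$. Hence $\var[D]\le \tfrac12\cdot m\cdot 2m/N=O(m^2/n)$. By Chebyshev, with probability at least $99/100$ we have $|D-\EE[D]|\le 10\,m/\sqrt n$. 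Then $\eps|D-\EE[D]|=O(\eps m/\sqrt n)=o(1)$. The same bound gives $D=O(m^2/n+m/\sqrt n)$, which controls the error terms from Step 1 and Step 2.

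Combining the steps, $|\ln\RR(F_m)|=o(1)$ with probability at least $99/100$. For large enough $n$ this is below $\ln(100/98)$, which yields $98/100\le\RR(F_m)\le 100/98$.
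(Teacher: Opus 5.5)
Your closed form for $\RR$ is correct, and so is the Efron--Stein bound $\var[D]\le m^2/N$ (it is the paper's Lemma~\ref{lem:var-l}). The gap is in the error accounting of Steps 1--2, where you treat $\eps m^2/n$ as $o(1)$: you write ``$o(\eps m^2/n)=o(1)$'' and call $\eps mD/n$ negligible. The hypotheses give only $\eps m/\sqrt n=o(1)$, hence $\eps^2m^2/n=o(1)$. By contrast, $\eps m^2/n=(\eps m/\sqrt n)(m/\sqrt n)$ is merely $o(1/\eps)$. It is unbounded whenever $\eps\to 0$ and $m\gg\sqrt n$, which the lemma allows because $\eps$ may depend on $n$.

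In that regime the main terms $-\eps D$ and $\eps k^2/(2n)$ are each of order $\eps m^2/n\to\infty$. They therefore have to cancel to within a relative error of $O(\eps)$; relative errors of order $m/n$ are not affordable. The terms you discard are $\eps k^3/n^2$, the cost $\eps mD/n$ of replacing $k^2$ by $m^2$, and $\eps\cdot O(m^3/N^2)$ from $\EE[D]$. All three are of order $\eps m^3/n^2$, which is about $n^{3/10}/\log^3 n$ for $\eps=n^{-2/5}$, $m=n^{9/10}/\log n$. For $\eps=n^{-1/2}$, $m=n/\log n$, the term $\eps m(m/n)^j$ is unbounded for every fixed order $j$. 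So no truncated Taylor expansion of $\ln\RR$ can close the argument: the higher-order terms do cancel, but only through an exact identity that the proposal never supplies. As written, your argument only works when $\eps m^2/n=O(1)$, for instance when $\eps$ is constant.

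The paper cites this lemma without proof. The Beretta--T\v{e}tek argument, which the paper adapts in the proof of Theorem~\ref{lbmain}, never computes where $\RR\approx 1$.
\begin{itemize}
\item $\RR$ depends only on $k$ and is increasing, with $\RR(k)/\RR(k-1)=\frac{n-k+1}{(1-\eps)n-k+1}\le e^{2\eps}$ for $k=o(n)$.
\item Take an interval $[a,b]$ containing $k$ with probability at least $1-\delta$ under both hypotheses. Your variance bound together with $\EE_n[k]-\EE_{(1-\eps)n}[k]=O(\eps m^2/n)$ gives $b-a=O(\eps m^2/n+m/\sqrt n)$.
\item Normalization, $\sum_{j\in[a,b]}\RR(j)\Pr[k=j\mid N=(1-\eps)n]=\Pr[k\in[a,b]\mid N=n]\le 1$, gives $\RR(a)\le 1/(1-\delta)$. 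Symmetrically, $\RR(b)\ge 1-\delta$.
\item Monotonicity then confines $\RR$ on $[a,b]$ to $[(1-\delta)e^{-2\eps(b-a)},\,e^{2\eps(b-a)}/(1-\delta)]$, and $\eps(b-a)=O(\eps^2m^2/n+\eps m/\sqrt n)=o(1)$.
\end{itemize}
The large quantity $\eps m^2/n$ only ever appears multiplied by a further $\eps$.

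If you want to keep your direct route, use the exact identity $\ln\RR=\int_{(1-\eps)n}^{n}s_N(k)\,\frac{dN}{N}$, where $s_N(k)=-(m-k)+\sum_{j<k}\frac{j}{N-j}$ is the derivative of $\ln\Pr[F_m\mid N]$ with respect to $\ln N$. It satisfies $\EE_N[s_N(k)]=0$ exactly, and this is the all-orders cancellation your expansion is missing. Combined with your concentration bound and the fact that $s_N$ increases in $k$ with slope $1+O(m/n)$, it gives $|\ln\RR|=O(\eps m^2/n^{3/2}+\eps^2m^2/n+\eps m/\sqrt n)=o(1)$.
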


\begin{lem}(Lemma 4.4 of \cite{BT2024})\label{lem:44BT2024} Let $0<\eps<p$ and set $q=p$ with probability $1/2$ and $q=p-\eps$, otherwise. Let $X_1,\ldots,X_m$ be a sequence of iid Bernoulli random variables with bias $q$, and let $l=\sum_{i=1}^m X_i\sim Bin(m,q)$. Define the ratio $\RR(F_m)=\frac{\Pr[l|q=p]}{\Pr[l|q=p-\eps]}$. Then, for $m=o(p/\eps^2)$, $\Pr[98/100\leq \RR(l)\leq 100/98]\geq 99/100$, for $p$ and thus $\eps$ small enough. \end{lem}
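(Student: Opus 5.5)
The plan is to compute the likelihood ratio in closed form and show that its logarithm is $o(1)$ on an event of probability at least $99/100$ under either hypothesis. Writing $a=\ln\frac{p}{p-\eps}>0$ and $b=\ln\frac{1-p}{1-p+\eps}<0$, we have
\[
\ln \RR(l) = l\,a + (m-l)\,b .
\]
It suffices to show $|\ln \RR(l)|\le \ln(100/98)$ with probability at least $99/100$, whether $q=p$ or $q=p-\eps$. Throughout, $p$ is small, so $1-p\ge 1/2$. This gives $|b|\le \eps/(1-p)\le 2\eps$ and $|a-b|=O(a+\eps)$. Set $x=\eps/p\in(0,1)$. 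I would split into two cases according to $x$.

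\emph{Case $x\le 1/2$.} Here $a=\ln\frac{1}{1-x}\le 2x=2\eps/p$, and Taylor expansion gives $a=\eps/p+O(\eps^2/p^2)$ and $b=-\eps/(1-p)+O(\eps^2)$. Write $l=mp+\Delta$ and decompose
\[
\ln\RR(l)=m\bigl(pa+(1-p)b\bigr)+\Delta(a-b).
\]
The bracket is exactly $\mathrm{KL}(\mathrm{Bern}(p)\,\|\,\mathrm{Bern}(p-\eps))$. By the expansions the first-order terms cancel, so it is $O(\eps^2/p)$ (using $x\le1/2$ and $p$ small). Hence the first term is $O(m\eps^2/p)=o(1)$.

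For the second term, Chebyshev gives $|l-mq|\le 10\sqrt{mq}\le 10\sqrt{mp}$ with probability at least $99/100$ under either hypothesis. Since $|mq-mp|\le m\eps$, this yields $|\Delta|\le 10\sqrt{mp}+m\eps$. Then
\[
|\Delta(a-b)| = O\!\left(\bigl(\sqrt{mp}+m\eps\bigr)\frac{\eps}{p}\right) = O\!\left(\sqrt{m\eps^2/p}+m\eps^2/p\right)=o(1).
\]

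\emph{Case $x>1/2$.} Here $a$ may be large, even unbounded as $\eps\to p$, so the expansion argument fails. Instead we use that $m=o(p/\eps^2)$ together with $\eps>p/2$ gives $m=o(1/p)$, and also $m\eps=o(p/\eps)=o(1)$. Under either hypothesis $\EE[l]\le mp=o(1)$, so by Markov $l=0$ with probability at least $99/100$ for $p$ small enough. On this event $\ln\RR(0)=mb$, and $|mb|\le 2m\eps=o(1)$.

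The main obstacle is exactly this split. The naive second-order expansion of $a$ requires $\eps/p$ to be bounded away from $1$, which the hypothesis $\eps<p$ does not guarantee. So the degenerate regime must be handled by arguing that the sample almost surely contains no successes. A secondary point to check carefully is that the KL bracket is genuinely of order $\eps^2/p$, not $\eps/p$: the linear terms $p\cdot\frac{\eps}{p}$ and $-(1-p)\cdot\frac{\eps}{1-p}$ cancel, and the remaining $O(p\,\eps^2/p^2)$ and $O(\eps^2)$ error terms are both $O(\eps^2/p)$. Finally, "small enough $p$" is what turns each $o(1)$ into a bound below $\ln(100/98)$.
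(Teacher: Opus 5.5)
Your proof is correct. There is no proof in the paper to compare it with: the lemma is imported from \cite{BT2024} and used as a black box in the proof of Claim~\ref{claim:large-norm}.

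The nearest in-paper argument is the interval technique of Section~\ref{sec:lb}, adapted from Lemma~4.3 of \cite{BT2024}. Applied here, it would go as follows.
\begin{itemize}
\item $\RR(l)$ is increasing in $l$, with consecutive ratios $\RR(l)/\RR(l-1)=\frac{p(1-p+\eps)}{(p-\eps)(1-p)}$.
\item Pick an interval $[a,b]$ containing $l$ except with probability $\eta$ under both hypotheses.
\item From $\RR(a)\Pr[l\in[a,b]\mid q=p-\eps]\le\sum_{j\in[a,b]}\Pr[l=j\mid q=p]\le 1$ get $\RR(a)\le 1/(1-\eta)$, and symmetrically $\RR(b)\ge 1-\eta$.
\item Bound $\RR(b)/\RR(a)$ by the product of consecutive ratios over the integers in $[a,b]$.
\end{itemize}

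You instead compute $\ln\RR(l)$ exactly, as $m\cdot\mathrm{KL}(\mathrm{Bern}(p)\,\|\,\mathrm{Bern}(p-\eps))$ plus a fluctuation term. This makes the $p/\eps^2$ threshold transparent, because the linear terms cancel. It also gives bounds that depend only on $\theta=m\eps^2/p$:
\begin{itemize}
\item when $\eps\le p/2$, $|\ln\RR(l)|=O(\theta+\sqrt{\theta})$ with probability at least $99/100$;
\item when $\eps>p/2$, $|\ln\RR(0)|\le 4\theta$ and $\Pr[l\neq 0]\le mp<4\theta$.
\end{itemize}
Because both bounds depend on $\theta$ alone, your case split is uniform, so a parameter sequence that moves between the two regimes causes no trouble.

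The interval method needs no Taylor expansion, and it is what generalizes to the multi-block fingerprints of Theorem~\ref{lbmain}. In the Bernoulli setting, however, it needs the same care you took near $\eps=p$: the consecutive ratio is unbounded there, and one must use that the interval then contains only the integer $0$.

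One precision. Smallness of $p$ enters only through $1-p\ge 1/2$, which gives $|b|\le 2\eps$. It is $\theta\to 0$, not small $p$, that pushes each error term below $\ln(100/98)$. That use of $p$ is genuinely needed: if $1-p\ll\eps$, then $m=\Theta(1/\eps)=o(p/\eps^2)$ samples already separate the two biases by counting failures.
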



\section{Algorithm for Sum Estimation}\label{sec:alg-sum-est} 

In this section, we describe an algorithm for computing an $(\eps,\delta)$-estimate of the sum $W$ of the weights of elements in $A$ and analyze the sample complexity of the algorithm. The algorithm has three phases. In the first phase, the algorithm samples elements using proportional sampling until it observes the first {\em collision}\footnote{We say a collision happens when the same element is sampled twice.}. Let $Z$ denote the set of sampled elements until it observes the first collision. The algorithm obtains an estimate $T=1/|Z|$ for the $\norm{p}_2$ of the sampling distribution $p$. Next, it checks if there exists an element $a_i\in A$ for which $p_i$ is sufficiently large, say some constant times $\norm{p}_2$ and if such an element exists, sets $S=\{a_i\}$. If there does not exist such an element in $A$, it independently samples $1/3000T$ elements following the distribution $p$. Let $S$ denote the set of sampled elements during this phase. Let $w(S) = \sum_{a \in S} w(a)$ denote the total weight of the elements in $S$. In the third phase, the algorithm obtains new samples from $A$ to estimate $\hat{p}$ of $w(S)/W$. The algorithm computes $\widehat{W}=w(S)/\hat{p}$ as an estimate of $W$. To bound the sample complexity of the algorithm, we analyze the number of samples needed in each phase. First, we bound the number of samples required to observe the first collision. Next, we show that $w(S)/W$ is sufficiently large, and hence, we can estimate $\frac{w(S)}{W}$ using a small number of samples and use it to estimate $W$.

Let $p$ denote the proportional sampling distribution for $A$, and let $\norm{p}_2$ denote the $l_2$ norm of $p$. We show that the number of samples required to observe a collision is closely related to $\norm{p}_2$. In particular, we show that $O(1/\norm{p}_2)$ samples are sufficient to observe at least one collision with constant probability. Moreover, for the sample $S$ generated as above, the quantity $\frac{w(S)}{W}$ can be shown to be at least some constant times $\norm{p}_2$ with probability at least $8/9$. Therefore, using $O(\frac{1}{\eps^2 \norm{p}_2})$ samples, we can obtain a good estimate of $\frac{w(S)}{W}$ with constant probability. The sample complexity of the algorithm is at most $O((\frac{1}{\eps^2 \norm{p}_2}+\frac{\log 1/\norm{p}_2}{\norm{p}_2})\log 1/\delta)$. Next, we state the main claim for Algorithm \ref{alg:estsum1} and give a formal analysis of its sample complexity.

\begin{algorithm}[htbp!]
    \caption{Sum Estimation using Proportional Sampling }
    \label{alg:estsum1}
    \begin{algorithmic}[1]
        \Procedure{SumEstimator}{$A, \varepsilon, \delta$}
            \State $r \gets 48 \log(1/\delta)$
            \For{$s = 1$ to $r$}

                \State $Z \gets \emptyset$ \Comment Estimate $\ell_2$ norm of $p$ and assign its value to $T$
                \While{true}
                    \State Sample element $a$ following distribution $p$
                    \If{$a \notin Z$}
                        \State Add $a$ to $Z$
                    \Else
                        \State \textbf{break}
                    \EndIf
               \EndWhile
		\State $T = 1/|Z|$ \Comment Estimate $\ell_2$ of $p$ and assign its value to $T$\newline

		\State Let $q=\frac{T}{4\cdot 3000\cdot 576^2\cdot 45^2}$. Let $U$ be a sample of size $10/q$ following distribution $p$. 
		\State Let $V$ denote a multiset of $192/q \cdot \log 1/q$ independent samples from $p$.
		\State Check if there exists an element $a_i$ in $U$ that appears in $V$ with empirical frequency at least $q-\gamma/2$, where $\gamma=q/2$.
		\If{Yes}
			\State Set $S=\{a_i\}$. \Comment Algorithm detects a heavy hitter
		\Else
			\State $S \gets \emptyset$.\Comment Algorithm samples elements such that $w(S)/W$ is large.
                	\For{$j=1$ to $\frac{1}{3000T}$}
                    		\State Sample an element $a$ following distribution $p$
				\State Add $a$ to $S$
                	\EndFor \newline
		\EndIf 

                \State $l \gets \frac{24 \ln(120)}{\eps^2 \cdot q}$, $h \gets 0$ \Comment Estimate $w(S)/W$
		\For{$i=1$ to $l$}
                    \State Sample an element $a$ following distribution $p$
                    \If{$a \in S$}
                        \State $h \gets h + 1$
                    \EndIf
                \EndFor								\Comment Estimate $w(S)/W$

                \State $w(S) \gets \sum_{a \in S} w(a)$
                \State $\widehat{p} \gets \frac{h}{l}$
                \State $\widehat{W}_s \gets \frac{w(S)}{\widehat{p}}$
            \EndFor
            \State \Return $\widehat{W}\gets \text{median}(\widehat{W}_1, \widehat{W}_2, \ldots, \widehat{W}_r)$
        \EndProcedure
    \end{algorithmic}
\end{algorithm}

\begin{claim} Let $0<\eps,\delta< 1$. Algorithm \ref{alg:estsum1} gives an $(\eps,\delta)$-estimate $\widehat{W}$ of $W$ and uses at most $O((\frac{1}{\eps^2 \norm{p}_2}+\frac{\log 1/\norm{p}_2}{\norm{p}_2})\log 1/\delta)$ proportional samples. \end{claim}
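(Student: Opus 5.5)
The plan is to show that a single iteration of the outer loop yields $\widehat{W}_s\in(1\pm\eps)W$ with probability at least $2/3$, using an expected $O(\frac{1}{\eps^2\norm{p}_2}+\frac{\ln 1/\norm{p}_2}{\norm{p}_2})$ samples. Lemma~\ref{lem:mom} then gives the $(\eps,\delta)$ guarantee and the $\log 1/\delta$ factor. To turn expected sample counts into worst-case ones, I would use Markov on each phase and absorb the constant loss into the $1/3$ failure budget. The analysis of one iteration splits into four events, each failing with a small constant probability (say $1/30$ each), and we union bound over them.

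\textbf{Step 1 (norm estimate).} Let $X=|Z|$ be the number of distinct samples before the first collision. The number of colliding pairs among $k$ samples has expectation $\binom{k}{2}\norm{p}_2^2$. So by Markov, $\Pr[X\le c/\norm{p}_2]\le c^2/2$, which gives $T\le \norm{p}_2/c$ with probability at least $1-c^2/2$. For the other direction I would show $\Pr[\text{no collision in } k \text{ samples}]\le e^{-\Omega(k^2\norm{p}_2^2)}$ for $k\le 1/\norm{p}_2$; if that fails when a heavy element dominates, I would fall back on a Paley--Zygmund or Efron--Stein (Theorem~\ref{thm:es}) variance bound on the collision count. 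This yields $T\ge \norm{p}_2/C$ with constant probability, and also $\EE[X]=O(1/\norm{p}_2)$. Thus $q=\Theta(T)$, and with constant probability $q=\Theta(\norm{p}_2)$ and in fact $q\le \norm{p}_2/K$ for the huge constant $K$ in the algorithm.

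\textbf{Step 2 (heavy-hitter test).} The sets $U$ and $V$ use $O(\frac{1}{q}\log\frac1q)=O(\frac{\ln 1/\norm{p}_2}{\norm{p}_2})$ samples. By the Chernoff bound (Lemma~\ref{lem:chernoff}) and a union bound over the at most $10/q$ elements of $U$, every empirical frequency in $V$ is within $\gamma/2$ of its true $p_i$. Consequently, a detected $a_i$ has $p_i\ge q/2$ and we get $w(S)/W\ge q/2$. Conversely, any element with $p_i\ge 2q$ appears in $U$ with probability $1-e^{-20}$ and is detected. If nothing is detected, we may therefore assume $\max_i p_i< 2q$.

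\textbf{Step 3 (large $w(S)/W$ without heavy hitters).} $S$ consists of $m=1/(3000T)$ independent draws. First I would bound the chance of a repeat in $S$ by $\binom{m}{2}\norm{p}_2^2$, which is a small constant because $T\gtrsim\norm{p}_2$; hence $S$ is a set and $w(S)/W=\sum_j p_{a_j}$. This sum has mean $m\norm{p}_2^2$ and variance at most $m\sum p_i^3\le m\cdot 2q\norm{p}_2^2$. Using $T\le \norm{p}_2/c$, the mean is $\Omega(\norm{p}_2)$, which exceeds $q$ by a large factor. The ratio variance$/\text{mean}^2=O(q/(m\norm{p}_2^2))$ is then small for the chosen constants, so Chebyshev gives $w(S)/W\ge q$ with constant probability. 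I expect this step to be the main obstacle. One has to juggle the two-sided accuracy of $T$ against the constants $3000$ and $576^2\cdot45^2$ so that simultaneously there are no duplicates, the mean dominates $q$, and the variance is controlled by the no-heavy-hitter condition. I would first fix $T\in[\norm{p}_2/C,\norm{p}_2/c]$ and then verify the inequalities.

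\textbf{Step 4 (ratio estimation).} With $\rho=w(S)/W\ge q/2$, $h\sim\mathrm{Bin}(l,\rho)$ has mean at least $\frac{12\ln 120}{\eps^2}$. The Chernoff bound gives $\widehat p\in(1\pm\eps/3)\rho$ with failure at most $1/60$, so $\widehat{W}_s=w(S)/\widehat p\in(1\pm\eps)W$. This phase uses $l=O(\frac{1}{\eps^2 q})=O(\frac{1}{\eps^2\norm{p}_2})$ samples. Summing the samples over the phases gives the stated bound.
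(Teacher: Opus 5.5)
Your plan follows the paper's proof essentially step for step. You estimate $\norm{p}_2$ from the first collision, run the heavy-hitter test on $U,V$, lower-bound $w(S)/W$ by Chebyshev using $\norm{p}_3^3\le p_{\max}\norm{p}_2^2$, estimate $w(S)/W$ by Chernoff, and boost with Lemma~\ref{lem:mom}. The main difference is that you derive the tails of $|Z|$ yourself instead of citing Schulte-Geers and Waggoner: Markov on the number of colliding pairs for one tail, and a second-moment bound plus independence of disjoint sample blocks for the other, which also gives $\EE[|Z|]=O(1/\norm{p}_2)$. You also bound duplicates in $S$ directly by $\binom{m}{2}\norm{p}_2^2$. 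This makes the argument self-contained. The worse constants are absorbed by the huge constant $K=4\cdot 3000\cdot 576^2\cdot 45^2$ in $q$. In Step 3 the variance-to-squared-mean ratio is at most $6000\,T^2/(K\norm{p}_2^2)$, so the step you flagged as the main obstacle has a great deal of slack.

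Three local slips need repair.

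\textbf{Step 4.} With $l=24\ln(120)/(\eps^2 q)$, the ratio $\rho$ can be as small as $q/2$ in the heavy-hitter branch, so you only have $\mu=l\rho\ge 12\ln(120)/\eps^2$. The two-sided bound $2e^{-\delta^2\mu/3}$ at $\delta=\eps/3$ then gives only $2\cdot 120^{-4/9}\approx 0.24$, not $1/60$. Take $\delta=\eps/2$ as the paper does. This gives exactly $1/60$ and suffices, because $1/(1-\eps/2)\le 1+\eps$ and $1/(1+\eps/2)\ge 1-\eps$ for $\eps\le 1$.

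\textbf{Step 2.} It is false that every element of $U$ has empirical frequency within $\gamma/2$ of $p_i$. For a constant $p_i$ the standard deviation is of order $\sqrt{q/\ln(1/q)}\gg q$. You only need one-sided tails: elements with $p_i<q/2$ stay below $3q/4$, and elements with $p_i\ge 2q$ reach $3q/4$. Both follow from Lemma~\ref{lem:chernoff} and a union bound over $U$.

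\textbf{Truncation.} Markov truncation cannot be applied to the first while loop, because the cap would be of order $1/\norm{p}_2$, which the algorithm does not know. The sample bound must therefore be read in expectation, or with constant probability per iteration. That is also all the paper's proof establishes, since it conditions on $\E_1$. The later phases need no truncation, since their lengths are determined by $T$.
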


\begin{proof} We bound the number $|Z|$ of samples required to observe the first collision with constant probability. In particular, for $|Z| = \Theta(\frac{1}{\norm{p}_2})$, we show that the first while loop terminates with probability at least $7/8$. We prove the following result.

\begin{lem}\label{lem:range-wst-T} We have $\Pr[|Z|\leq \frac{1}{24 \norm{p}_2}]\leq 1/16$ and $\Pr[|Z|> \frac{15}{\norm{p}_2}] \leq \frac{1}{16}$. In particular, $\Pr[1/24 \cdot 1/\norm{p}_2<|Z|\leq \frac{15}{\norm{p}_2}]\geq 7/8$. \end{lem}

\begin{proof} Using a result of Schulte-Geers and Waggoner \cite{WG2022}, we obtain, for any $\eta \in (0,1)$, $\Pr[|Z|\leq \max\{e^2,2\ln(\frac{1}{\eta})\}\frac{2}{\norm{p}_2}]\geq 1-\eta$ and $\Pr[|Z|\leq (\frac{\eta}{e})\frac{2}{\norm{p}_2}]\leq \eta$. For $\eta=1/16$, we obtain the above result. \end{proof}

For the following, we condition on the event $\E_1$ that Lemma \ref{lem:range-wst-T} holds. Let $T=1/|Z|$ be an estimate of $\norm{p}_2$. Conditioned on $\E_1$, we have $\norm{p}_2/15<T<24\norm{p}_2$. 

Let $q=\frac{T}{4\cdot 3000\cdot 576^2\cdot 45^2}$. Next, the algorithm decides if there exists a {\em heavy hitter} element $a_i$ in the instance with $p_i\geq q$ or for all elements $a_i$, we have $p_i< q-\gamma=q/2$, where we set $\gamma=q/2$. Below, we argue that the algorithm with probability at least $0.98$ decides correctly whether there exists such a {\em heavy hitter} element in the instance. 

\begin{claim} Algorithm \ref{alg:estsum1} decides correctly whether there exists an element $a_i$ with $p_i\geq q$ or $p_i\leq q/2$ for all $a_i\in A$, with probability at least $0.98$. \end{claim}

\begin{proof} Suppose there exists an element $a_i\in A$ with $p_i\geq q$. This element is sampled with probability at least $q$, and will be included in $U$ with probability at least $1-e^{-10}\geq 0.99$. Using a Chernoff bound argument, using a sample size of at least $192/q \ln 1/q$, this {\em heavy hitter} element with probability at least $(1-q^4)$ will have empirical frequency in $V$ at least $q-\gamma/2$. Therefore, Algorithm \ref{alg:estsum1} will output Yes with probability at least $0.98$. 

Suppose for any element $a_i$ in $A$, we have $p_i<q-\gamma=q/2$. We need to argue that with high probability, none of these elements appear with significantly higher empirical frequency in $V$. Let us consider any element $a_i$ in $U$ with probability $p_i<q/2$. Using a sample size of $192/q \ln 1/q$, and using Chernoff bounds, the probability that its empirical frequency in $V$ is at least $q-\gamma/2$ is at most $q^{8}$. Since there are at most $10/q$ elements in $U$, the probability that Algorithm \ref{alg:estsum1} identifies any such element as a heavy hitter is at most $0.01$.

\end{proof}

Let us condition on event $\E_2$ that Algorithm \ref{alg:estsum1} correctly decides if there exists $a_i$ such that $p_{\max}=\max_{i=1}^n p_i \geq q$ or for any $a_i\in A$, we have $p_{\max}<q/2$. We have $\Pr[\E_2]\geq 0.98$. If Algorithm \ref{alg:estsum1} finds a heavy hitter element $a_i$, it sets $S=\{a_i\}$. Since the algorithm might return anything if $q/2<p_{\max}=\max_{i=1}^n p_i<q$, the heavy hitter element $a_i$ detected by the algorithm will have $p_i\geq q/2$. If there does not exist any {\em heavy hitters} in the instance, the algorithm samples $1/3000T$ independent samples following distribution $p$. Let $S$ denote the set of sampled elements. Since $|S|=1/3000T$, we have $1/72000\leq |S|\cdot \norm{p}_2\leq 1/200$.

\begin{claim} With probability at least $0.99$, there are no repeated elements in $S$. \end{claim} 
\begin{proof} Using Lemma \ref{lem:range-wst-T}, for $\eta=0.01$, with probability at least $0.99$, the number of samples required for a collision is at least $1/50e \cdot 1/\norm{p}_2$. Since we have $\frac{1}{15T}<1/\norm{p}_2$, the number of samples in $S$ is at most $1/3000T<1/200 \cdot 1/\norm{p}_2\leq 1/50e \cdot 1/\norm{p}_2$. Hence, with probability at least $0.99$, there is no repeated element in $S$. \end{proof}

Let $\E_3$ denote the event that there are no repeated elements in $S$. For the following, we condition on $\E_3$.

\begin{lem}\label{lem:relate-wst-w} We have $\Pr[\frac{w(S)}{W} \geq q/2]>8/9$. \end{lem}

\begin{proof} If Algorithm \ref{alg:estsum1} detects a heavy hitter element $a_i$ in $A$, then we have $w(S)/W=p_i\geq q/2$. If Algorithm \ref{alg:estsum1} did not detect any heavy hitters, we proceed as follows. We can write $\frac{w(S)}{W}=\sum_{a_i\in S} \frac{w(a_i)}{W}= \sum_{a_i\in S} p_i$. We have $p_{\max}=\max_i p_i \leq q$. Let $X$ be any sample following distribution $p$. We have $\EE[X]=W\cdot \norm{p}_2^2$. Hence, $\EE[w(S)/W]=|S|\cdot \norm{p}_2^2$ and $\var[w(S)/W]\leq |S|\cdot \norm{p}_3^3$. Since $p_{\max}\leq q$, we have $\norm{p}_3^3\leq q \cdot \norm{p}_2^2$. Using this we can write, $\var[w(S)/W]\leq |S| \cdot q \cdot \norm{p}_2^2\leq 1/3000T \cdot q\cdot \norm{p}_2^2$. The standard deviation of $w(S)/W$ will be at most $\sigma(w(S)/W)\leq \sqrt{\frac{q}{3000T}} \norm{p}_2$. Using Chebyshev's inequality, with probability at least $8/9$, $w(S)/W\geq \EE[w(S)/W]-3\cdot \sigma(w(S)/W)\geq 1/3000T \cdot T^2/576 - 3\cdot \sqrt{\frac{q}{3000T}}\cdot \norm{p}_2$. For $q=\frac{T}{4\cdot 3000\cdot 45^2\cdot 576^2}$, we will get with probability at least $8/9$, $w(S)/W\geq \frac{T}{2\cdot 3000\cdot 576}$. Hence, with probability at least $8/9$, $w(S)/W\geq \min\{q,\frac{T}{2\cdot 3000\cdot 576}\}=q$.
\end{proof}

We have shown that $w(S)/W$ is at least $q/2$ with probability at least $8/9$. We condition on the event $\E_4$ that $w(S)/W\geq q/2$. We will show that $\wpp=h/l$ provides an $(\eps/2,1/60)$-estimate of $w(S)/W$ for $l = O(\frac{1}{\eps^2 \norm{p}_2})$.

\begin{lem}\label{lem:est-wst-w} $h/l$ is an $(\eps/2, 1/60)$ estimate of $w(S)/W$ for $l= O(\frac{1}{\eps^2 \norm{p}_2})$. \end{lem}

\begin{proof} We obtain independent samples $Y_1,Y_2,\ldots,Y_l$ from $A$. Now, we define a random variable

\[h_i=\begin{cases}
	1 & Y_i\in S\\0 & otherwise
\end{cases}\]

Let $h=\sum_{i=1}^l h_i$ denote the number of sampled elements that belong to the set $S$. We can write $\EE[h] = \sum_{i=1}^l \EE[h_i] = \sum_{i=1}^l \frac{w(S)}{W} = l\cdot \frac{w(S)}{W}$. Hence, $1/l \cdot \EE[h]=w(S)/W$. Using Chernoff bound, we have $\Pr[|h-\EE[h]|>\eps/2 \cdot \EE[h]]\leq 2e^{-\frac{1/4 \cdot \EE[h]\eps^2}{3}}$. We have $\Pr[|h/l-\EE[h/l]|>\eps/2 \cdot \EE[h/l]]\leq 2e^{-\frac{\EE[h]\eps^2}{12}}$. We need to choose $l$ such this probability $2e^{-\frac{l\cdot w(S)/W \cdot \eps^2}{12}}=\frac{1}{60}$. Conditioned on $\E_4$, this implies $l\geq \frac{24 \ln(120)}{\eps^2 q}$. Hence, $h/l$ gives an $(\eps/2,1/60)$ estimate of $w(S)/W$ and the required sample size is $l=O\left(\frac{1}{\eps^2 \norm{p}_2}\right)$.
\end{proof}

We have shown that $\hat{p}=h/l$ is an $(\eps/2,1/60)$ estimator of $w(S)/W$. Algorithm \ref{alg:estsum1} in iteration $s$ returns $\widehat{W_s}=w(S)/\hat{p}$ as an estimate of $W$. Next, we show that $\widehat{W_s}$ is an $(\eps,1/3)$-estimate of $W$. First, we show that $(1-\eps)W\leq \widehat{W_s}\leq (1+\eps)W$, and then we bound the probability that $\widehat{W_s}$ falls outside this range. Using Lemma \ref{lem:est-wst-w}, we have $(1-\eps/2)\frac{w(S)}{W} \leq \frac{h}{l}\leq (1+\eps/2)\frac{w(S)}{W}$. Rearranging the terms and using $\hat{p}=h/l$ and $\eps\leq 1$, we obtain $(1-\eps)W \leq \widehat{W_s}\leq (1+\eps)W$. Next, we bound the failure probability of the event that $\widehat{W_s} \notin (1 \pm \eps)W$. Algorithm \ref{alg:estsum1} fails if either $\E_1$ or $\E_2$ or $\E_3$ or $\E_4$ does not hold or if $\hat{p}$ is not an $(\eps/2,1/60)$-estimator of $w(S)/W$. The total failure probability can be bounded as $1/8+1/50+1/100+1/9+1/60\leq 1/3$. Once we have an $(\eps, 1/3)$-estimate of $W$, we take $r=48\log(1/\delta)$ independent copies of this estimator and use Lemma \ref{lem:mom} to obtain an $(\eps,\delta)$-estimate of $W$.

Next, we analyse the sample complexity of our algorithm. The number of samples required to observe the first collision is $\Theta(1/\norm{p}_2)$. Checking whether a heavy hitter element exists requires $O(1/q \ln 1/q)=O(\frac{\ln 1/\norm{p}_2}{\norm{p}_2})$ samples. Estimating $w(S)/W$ requires at most $O(\frac{1}{\eps^2 \norm{p}_2})$ samples. Repeating this for at most $48\log 1/\delta$ times, overall sample complexity is $O((\frac{1}{\eps^2 \norm{p}_2}+\frac{\ln 1/\norm{p}_2}{\norm{p}_2})\log 1/\delta)$.
\end{proof}

\section{Algorithm for Moment Estimation} In this section, we design an algorithm for an $(\eps,\delta)$-moment estimation using weighted sampling. We start by describing the high-level idea of our algorithm. Bhattacharya and Pradhan \cite{BP2025} have given an algorithm for an $(\eps,1/3)$-moment estimation that for $h>1$ uses $O(\sqrt{n}/\eps+n^{1-1/h}/\eps^2)$ samples. In this work, we design algorithms for an $(\eps,\delta)$ moment estimation with better sample complexity bounds on an instance to instance basis. First, we observe that the sample complexity of the $(\eps,1/3)$-moment estimation algorithm of Bhattacharya and Pradhan \cite{BP2025} can be written as $O(\sqrt{n}/\eps+\frac{1}{\eps^2 \norm{p}_h})$. However, since the value of $\norm{p}_h$ is not known for the proportional sampling distribution $p$, we cannot directly use their algorithm. One natural idea is to first estimate $\norm{p}_h$ of the distribution $p$, and then run the algorithm of \cite{BP2025} with this estimate. In order to accomplish this, we observe that $\norm{p}_h^h$ captures the probability of a $h$-way collision for distribution $p$. Our algorithm keeps on sampling until it observes the first $h$-way collision and we show that the expected number of samples required to observe a $h$-way collision can be written as $\Theta(h/\norm{p}_h)$. Next, we observe that the $O(\sqrt{n}/\eps)$ term in the sample complexity for moment estimation by Bhattacharya and Pradhan appears due to the sum estimation algorithm of \cite{BT2024} that the authors use as a subroutine in their algorithm. We use Algorithm \ref{alg:estsum1} to estimate the sum of the weights. The overall algorithm as motivated above can be described as follows. In the first phase, we keep on sampling until we observe a $h$-way collision. Suppose this requires $L$ samples. In the second phase, we obtain independent $O(L/(h\eps^2))$ samples and run the algorithm of Bhattacharya and Pradhan \cite{BP2025} with these samples as input. We will show that Algorithm \ref{alg:estmom} gives an $(\epsilon,1/3)$ estimate of $W_h(A)$ for $h>1$ and the sample complexity of Algorithm \ref{alg:estmom} is $O(\frac{h}{\norm{p}_h}+\frac{1}{\eps^2 \norm{p}_h}+\frac{1}{\eps^2 \norm{p}_2}+\frac{\ln 1/\norm{p}_2}{\norm{p}_2})$, where the first phase requires $O(\frac{h}{\norm{p}_h})$ samples to observe a $h$-way collision, and the second phase requires $O(\frac{1}{\eps^2 \norm{p}_h})$ samples to estimate $W_h(A)$. Next, we present the algorithm and argue its correctness.

\begin{algorithm}[htbp!]
\caption{Moment Estimation using Proportional Sampling }
\label{alg:estmom}
\begin{algorithmic}[1]
	\Procedure{MomentEstimator}{$A,h,\eps$}
        	\State $\eps_1\gets \eps/2$
                \State $C[i]\gets 0\; \forall i\in [n]$
                \While{true}
                	\State Sample an element $a_i$ following distribution $p$
                    	\State $C[i]\gets C[i]+1$
                    	\If{$C[i]$ equals $h$}
                        	\State \textbf{break}
                    	\EndIf
                \EndWhile
                \State $L \gets \sum_{i=1}^n C[i]$
                \State $\widetilde{W} \gets $ \hyperlink{alg:estsum1}{\textsc{SumEstimator}}$(A,\epsilon_1,1/8)$
                \For{$j=1 \text{ to } m=\frac{30720\cdot L}{h\eps^2}$}
                	\State Sample an element $b_j$ from $A$ following distribution $p$
	        	\State $\tilde{p_j}\gets \frac{w(b_j)}{\widetilde{W}}$
                	\State $X_j\gets \frac{w(b_j)^h}{\tilde{p_j}}$
                \EndFor
                \State\Return $X=\frac{1}{m}\sum\limits_{j=1}^m X_j$
        \EndProcedure
    \end{algorithmic}
\end{algorithm}

\begin{claim} Let $h>1$. Algorithm \ref{alg:estmom} gives an $(\eps,1/3)$-estimate of $W_h(A)$ using $O(\frac{h}{\norm{p}_h}+\frac{1}{\eps^2 \norm{p}_h}+\frac{1}{\eps^2 \norm{p}_2}+\frac{\ln 1/\norm{p}_2}{\norm{p}_2})$ samples. \end{claim}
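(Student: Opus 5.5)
We prove the claim in three steps: bound the stopping time $L$, control the error from the sum estimate, and bound the variance of the final averaging step.

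\textbf{Step 1: the stopping time $L$.} We show that with constant probability (say $\geq 15/16$),
\[
\frac{c_1 h}{\norm{p}_h}\le L\le \frac{c_2 h}{\norm{p}_h}
\]
for absolute constants $c_1,c_2$. This mirrors Lemma~\ref{lem:range-wst-T} for $h$-wise collisions.

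For the upper bound, split the stream into $\lceil h\rceil$ consecutive blocks of $k$ samples each. Within a block, the expected number of $h$-subsets of equal samples is $\binom{k}{h}\norm{p}_h^h$. A second-moment argument then shows that $k=\Theta(h/\norm{p}_h)$ samples give an $h$-way collision with constant probability. Alternatively, Poissonize and bound the probability that no element gets count $h$.

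For the lower bound, use a union bound. For $k$ samples,
\[
\Pr[\exists\, h\text{-way collision}]\le \binom{k}{h}\norm{p}_h^h\le \left(\frac{ek\norm{p}_h}{h}\right)^h,
\]
which is small when $k=c_1h/\norm{p}_h$ with $c_1$ small. Hence $m=30720L/(h\eps^2)=\Theta\bigl(1/(\eps^2\norm{p}_h)\bigr)$. Together with $L$ itself, the first phase and the averaging loop use $O\bigl(h/\norm{p}_h+1/(\eps^2\norm{p}_h)\bigr)$ samples. The sum-estimation call adds the remaining terms, by the preceding claim with $\delta=1/8$.

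\textbf{Step 2: reducing to exact normalization.} Condition on $\widetilde W\in(1\pm\eps_1)W$, which holds with probability $7/8$. Note that
\[
X_j=\widetilde W\, w(b_j)^{h-1}.
\]
Define $X'_j=W\,w(b_j)^{h-1}$. This is an unbiased estimator of $W_h(A)$, since $\EE[X'_j]=\sum_i p_i\,W\,w_i^{h-1}=\sum_i w_i^h$. Moreover $X=(\widetilde W/W)\,\overline{X'}$, where $\overline{X'}$ is the average of the $X'_j$. It therefore suffices to show that $\overline{X'}\in(1\pm\eps/4)W_h(A)$ with probability $\geq 1-1/16$, since
\[
(1\pm\eps/2)(1\pm\eps/4)\subseteq(1\pm\eps)\quad\text{for }\eps\le1.
\]

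\textbf{Step 3: variance of $\overline{X'}$.} We compute
\[
\EE[X'^2_j]=W^2\sum_i p_i w_i^{2h-2}=W\sum_i w_i^{2h-1}.
\]
Write $w_i=p_iW$. Then
\[
\frac{\EE[X'^2]}{W_h^2}=\frac{\norm{p}_{2h-1}^{2h-1}}{\norm{p}_h^{2h}}.
\]
By monotonicity of norms (Fact~\ref{fact:norms}), $\norm{p}_{2h-1}\le\norm{p}_h$, which gives the bound
\[
\frac{\norm{p}_{2h-1}^{2h-1}}{\norm{p}_h^{2h}}\le\frac{1}{\norm{p}_h}.
\]
Chebyshev's inequality then requires $m\ge 16\cdot 16/(\eps^2\norm{p}_h)$. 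This holds on the event $L\ge c_1h/\norm{p}_h$, given the constant $30720$; this is where the lower bound on $L$ is essential. Summing the failure probabilities $1/16+1/8+1/16$ gives at most $1/3$.

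The main obstacle is Step 1, the two-sided concentration of the $h$-wise collision time, which must hold uniformly in $h$ with the stated linear dependence on $h$. For the lower bound, the union-bound estimate above is clean. For the upper bound I would try Poissonization. With $\mathrm{Poi}(\lambda)$ samples the counts are independent, so
\[
\Pr[\text{no count}\ge h]=\prod_i \Pr[\mathrm{Poi}(\lambda p_i)<h].
\]
Each factor is at most $\exp\bigl(-\Pr[\mathrm{Poi}(\lambda p_i)\ge h]\bigr)$, and for $\lambda p_i\le h$ we have $\Pr[\mathrm{Poi}(\lambda p_i)\ge h]\ge (\lambda p_i)^h e^{-\lambda p_i}/h!$. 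Choosing $\lambda\approx Ch/\norm{p}_h$ and using $h!\le h^h$ then makes the sum of these probabilities a large constant. Any element with $\lambda p_i>h$ alone gives a collision with probability about $1/2$, which is handled separately.
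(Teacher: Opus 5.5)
Your proof is correct and has the same skeleton as the paper's. The structure is: a two-sided bound on $L$, then conditioning on the $(\eps/2,1/8)$ sum estimate, then Chebyshev with the second-moment bound $W\sum_i w_i^{2h-1}/W_h(A)^2=\norm{p}_{2h-1}^{2h-1}/\norm{p}_h^{2h}\le 1/\norm{p}_h$. As in the paper, the lower bound on $L$ forces $m=\Omega(1/(\eps^2\norm{p}_h))$, and the upper bound gives the sample count. The differences lie in the two auxiliary steps.

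\textbf{Collision time.} The paper does not prove the concentration of $L$. It imports from Schulte-Geers and Waggoner the tail bounds $\Pr[L\le \max\{e^2,2\ln(1/\eta)\}h/\norm{p}_h]\ge 1-\eta$ and $\Pr[L\ge (\eta/e)h/\norm{p}_h]\ge 1-\eta$ with $\eta=1/16$. This gives $h/(48\norm{p}_h)\le L\le 8h/\norm{p}_h$ with probability $7/8$. Your union bound $\binom{k}{h}\norm{p}_h^h\le (ek\norm{p}_h/h)^h$ gives the lower tail directly, with failure probability $(ec_1)^h$, and Poissonization gives the upper tail. So your argument is self-contained exactly at the step the paper treats as a black box.

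\textbf{Bias from $\widetilde W$.} The paper works with $\tilde p_j$ directly. It shows $\EE[X]\in(1\pm\eps_1)W_h(A)$, carries a factor $(1+\eps_1)^2$ in the second moment, and applies Chebyshev at deviation $(\eps-\eps_1)W_h(A)$. Your factorization $X=(\widetilde W/W)\,\overline{X'}$, with $\overline{X'}$ unbiased and independent of $\widetilde W$, separates the two error sources and is somewhat cleaner. Both routes give the same bound up to constants.

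\textbf{A constant to fix in the Poissonization sketch.} An element with $\lambda p_i>h$ reaches count $h$ only with probability about $1/2$. That does not fit your $15/16$ budget; indeed $1/2+1/8+1/16>1/3$. The fix is routine:
\begin{itemize}
\item Use the threshold $\lambda p_i>\beta h$ for a large constant $\beta$, so such an element alone reaches count $h$ with probability $1-e^{-\Omega(\beta h)}$.
\item For the remaining elements you then pay $e^{-\beta h}$ instead of $e^{-h}$ in the lower bound on $\Pr[\mathrm{Poi}(\lambda p_i)\ge h]$. This is absorbed by taking $C$ a sufficiently large multiple of $e^{\beta}$.
\item Finally, de-Poissonize: $\mathrm{Poi}(\lambda)\le 2\lambda$ with high probability, since $\lambda\ge Ch\ge 2C$.
\end{itemize}
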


\begin{proof} Our algorithm has two phases. In the first phase, the algorithm keeps on sampling elements until it observes the first $h$-way collision\footnote{A $h$-way collision happens when an element is sampled $h$ times.}. Let $L$ denote the number of samples required to observe the first $h$-way collision. Then, using a result of Schulte-Geers and Waggoner \cite{WG2022}, we get for any $\eta \in (0,1)$, $\Pr[L\leq \max\{e^2,2\ln(\frac{1}{\eta})\}\frac{h}{\norm{p}_h}]\geq 1-\eta$ and $\Pr[L\geq (\frac{\eta}{e})\frac{h}{\norm{p}_h}]\geq 1-\eta$. For $\eta=1/16$, this implies that $\Pr\left[\frac{h}{48 \norm{p}_h}\leq L \leq \frac{8h}{\norm{p}_h}\right] \geq \frac{7}{8}$. Let $\E_1$ denote the event that $\frac{h}{48 \norm{p}_h}\leq L \leq \frac{8h}{\norm{p}_h}$, and we have $\Pr[\E_1]\geq 7/8$. Conditioned on $\E_1$, this phase requires at most $\Theta(\frac{h}{\norm{p}_h})$ samples. The algorithm then obtains an $(\eps/2,1/8)$-estimate $\widetilde{W}$ of the sum of the weights of $A$ using Algorithm \ref{alg:estsum1}. 

Next, the algorithm obtains $m=\frac{30720\cdot L}{h\eps^2}=O(\frac{1}{\eps^2\cdot \norm{p}_h})$ independent samples. Let $b_j$ denote the element sampled during iteration $j$, where $b_j$ has weight $w(b_j)$ and we set $\tilde{p_j}=w(b_j)/\wt{W}$ and $X_j=w(b_j)^h/\tilde{p_j}$. We show that $X=1/m \sum_{j=1}^m X_j$ gives an $(\eps,1/3)$-estimate of $W_h(A)$.

Let $\E_2$ denote the event that $\wt{W}\in [(1-\eps_1)W,(1+\eps_1)W]$. Using Algorithm \ref{alg:estsum1} with $\delta=1/8$, we have $\Pr[\E_2]\geq 7/8$. Conditioned on $\E_2$, for any element $b_j\in A$, we have $\frac{p_j}{1+\eps_1}\leq \tilde{p_j}\leq \frac{p_j}{1-\eps_1}$. We can write $\EE[X_j]=\sum_{a\in A} \frac{w(a)^h}{\tilde{p_a}}\cdot p_a$. This gives $(1-\eps_1)W_h(A) \leq \EE[X_j]\leq (1+\eps_1)W_h(A)$. Since $X_1,X_2,\ldots,X_m$ are independent random variables, using linearity of expectation, we have $\EE[X]=\frac{1}{m}\sum_{j=1}^m \EE[X_j]=\EE[X_j]\in [(1-\eps_1)W_h(A), (1+\eps_1)W_h(A)]$. Since $X$ is not an unbiased estimator of $W_h(A)$, we need to bound its variance. 

\begin{claim} For $m=\frac{30720\cdot L}{\eps^2 h}\geq \frac{640}{\eps^2 \norm{p}_h}$ and $\eps_1=\eps/2$, we have $\Pr[|X-W_h(A)|>\eps W_h(A)]\leq 1/40$. \end{claim}
\begin{proof} We start by bounding the variance of $X_j$.

\begin{equation*}
        \begin{aligned}
        \var[X_j]       & \leq \EE[X_j^2]= \sum_{a\in A}\frac{w(a)^{2h}}{\tilde{p_a}^2}\cdot p_a \\
                        & \leq (1+\epsilon_1)^2\sum_{a\in A}\frac{w(a)^{2h}}{p_a} \\
                        & = (1+\epsilon_1)^2\cdot W \cdot \sum_{a\in A}\frac{w(a)^{2h}}{w(a)} \\
                        & = (1+\epsilon_1)^2\cdot W \cdot \sum_{a\in A} w(a)^{2h-1}
        \end{aligned}
\end{equation*}

Since $X_1,\ldots,X_m$ are independent random variables, we have $\var[X]=\frac{\var[X_j]}{m}\leq \frac{(1+\eps_1)^2\cdot W}{m} \cdot \sum\limits_{a\in A} w(a)^{2h-1}$. Using Chebyshev's inequality, we get

\begin{equation*} 
        \begin{aligned}
        \Pr(|X-W_h(A)|>\eps W_h(A))     & \leq \Pr(|X-\EE[X]|>(\eps-\eps_1)W_h(A))\\
                                        & \leq \frac{\var[X]}{(\eps-\eps_1)^2 W_h(A)^2}\\
                                        & \leq \frac{(1+\eps_1)^2\cdot W \cdot \sum_{a\in A} w(a)^{2h-1}}{m(\eps-\eps_1)^2 W_h(A)^2}\\
                                        & = \frac{(1+\eps_1)^2\cdot W \cdot ||w(A)||_{2h-1}^{2h-1}}{m(\eps-\eps_1)^2\cdot(||w(A)||_h^h)^2}\\
                                        & \leq \frac{(1+\eps_1)^2\cdot W \cdot (||w(A)||_{h}^{h})^{2-1/h}}{m(\eps-\eps_1)^2\cdot(||w(A)||_h^h)^2} &\hfill [Using~Fact~\ref{fact:norms},~||w(A)||_{2h-1}\leq ||w(A)||_h]\\
                                        & = \frac{(1+\eps_1)^2\cdot W }{m(\eps-\eps_1)^2\cdot||w(A)||_h}\\
                                        & = \frac{(1+\eps_1)^2}{m(\eps-\eps_1)^2}\cdot \frac{1}{||p||_h}\\
        \end{aligned}
\end{equation*}

For $m=\frac{30720\cdot L}{\eps^2 h}\geq \frac{640}{\eps^2 \norm{p}_h}$ and $\eps_1=\eps/2$, the above probability is at most $1/40$.

\end{proof}

Let $\E_3$ denote the event that the estimate $X$ is not in $[(1-\eps)W_h(A),(1+\eps)W_h(A)]$, and using the above, we get $\Pr[\E_3]\leq 1/40$. Our algorithm fails if either $\E_1$ fails or $\E_2$ fails or $\E_3$ becomes true. Overall failure probability is bounded by $1/8 + 1/8 + 1/40\leq 1/3$.

Next, we discuss the sample complexity of Algorithm \ref{alg:estmom}. Our algorithm requires at most $O(\frac{h}{\norm{p}_h})$ samples in the first phase. We used Algorithm \ref{alg:estsum1} as a subroutine to obtain estimate $\wt{W}$ of the sum of weights, and for this, Algorithm \ref{alg:estsum1} requires at most $O(\frac{1}{\eps^2\norm{p}_2}+\frac{\ln 1/\norm{p}_2}{\norm{p}_2})$ samples. In the final phase, we need $O(\frac{1}{\eps^2 \norm{p}_h})$ samples to estimate the moment $W_h(A)$. Hence, Algorithm \ref{alg:estmom} gives an $(\eps,1/3)$ estimate of $W_h(A)$ using $O(\frac{h}{\norm{p}_h}+\frac{1}{\eps^2 \norm{p}_2}+\frac{\ln 1/\norm{p}_2}{\norm{p}_2}+\frac{1}{\eps^2 \norm{p}_h})$ samples.
\end{proof}

Using Lemma \ref{lem:mom}, one can convert the above $(\eps,1/3)$-estimator into an $(\eps,\delta)$-estimator using $O((\frac{h}{\norm{p}_h}+\frac{1}{\eps^2 \norm{p}_2}+\frac{\ln 1/\norm{p}_2}{\norm{p}_2}+\frac{1}{\eps^2 \norm{p}_h})\log 1/\delta)$ samples.

\section{Lower Bound for Sum Estimation} \label{sec:lb} 

In this section, we prove a lower bound on the number of samples required to obtain an $(\eps,1/3)$-estimate of the sum of the weights of the elements of $A$ using proportional sampling. We start by describing the high level idea of the proof. For every input instance $I(A,W)$, where $A$ denotes the set of input elements and $W$ denotes the sum of the weights of the elements in $A$, we construct another instance $I'(A',W')$ such that the sums of the weights of the elements of the two instances differ by at least a factor of $(1\pm \eps)$. We will show that, given a small number of samples, it is not possible to decide whether the samples come from $I$ or $I'$ with probability at least $2/3$. We assume that an adversary chooses instance $I$ or $I'$ with equal probability, and generates $m$ proportional samples from the chosen instance. Let $S$ denote the (multi) set of $m$ sampled elements. The fingerprint $F_m$ of a sample $S$ is defined as a collection of tuples of the form $(c_a,w(a))$, where for each distinct element $a\in S$, we have a tuple $(c_a,w(a))$, with $c_a$ denoting to the number of times $a$ appears in the sample and $w(a)$ denotes the weight of element $a$. Since our computational model allows only the following two operations, given two sampled elements deciding whether they are the same and querying for the weight of a sampled element, it is sufficient to consider the fingerprints of the samples rather than the actual samples \cite{BT2024}.

To show the lower bound, we construct instances $I$ and $I'$ for which the fingerprints of a small sized sample are similar. In particular, we will show that, if we use only $m$ proportional samples whose fingerprint is denoted as $F_m$, the probability of observing the fingerprint $F_m$ from instance $I$ is nearly the same as the probability of observing it from instance $I'$. Let $\E$ and $\E'$ denote the events that instances $I$ and $I'$ are chosen, respectively. Given a (multi)set $S$ of $m$ samples, let $F_m$ denote its fingerprint. We define the following ratio $\mathcal{R}(F_m)$.

$$\mathcal{R}(F_m)=\frac{\Pr[F_m \mid \E']}{\Pr[F_m \mid \E]}$$

If this ratio is close to $1$, then we show that the Bayes classifier cannot distinguish between $\E$ and $\E'$ with probability at least $2/3$. Since the Bayes classifier minimizes the probability of a classification error, it follows that no algorithm using $m$ samples can distinguish between $I$ and $I'$ with probability at least $2/3$. A similar result has been proved as Lemma $13$ in \cite{BT2024}, and we give a short proof of this result for completeness.

\begin{claim}\label{claim:bayes} Given two disjoint events $\E$ and $\E'$ such that $\Pr[\E]=\Pr[\E']=1/2$, and a random variable $X$, let $\RR(X)=\frac{\Pr[X \mid \E']}{\Pr[X \mid \E]}$. If $\Pr[98/100\leq \RR(X)\leq 100/98]\geq 24/25$, then any classifier that takes $X$ as input, and classifies $\E$ and $\E'$, has error probability at least $1/3$. \end{claim}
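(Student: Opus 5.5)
We bound the error of the Bayes classifier, which minimizes error among all classifiers that see only $X$. It is therefore enough to show that the Bayes classifier errs with probability at least $1/3$.

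\textbf{Step 1: write the error as an overlap.} Since $\Pr[\E]=\Pr[\E']=1/2$, the Bayes rule on input $x$ outputs whichever of $\E,\E'$ has the larger value of $\Pr[x\mid\cdot]$. Its error probability is therefore
\[
\mathrm{err}=\frac12\sum_x \min\{\Pr[x\mid \E],\Pr[x\mid \E']\},
\]
with the sum replaced by an integral against a common dominating measure if $X$ is continuous. Because this is the minimum over all decision rules of $\frac12\Pr[\text{say }\E'\mid\E]+\frac12\Pr[\text{say }\E\mid\E']$, any classifier has error at least $\mathrm{err}$.

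\textbf{Step 2: restrict to the good set and apply the ratio bound.} Let $G=\{x: 98/100\le \RR(x)\le 100/98\}$. The hypothesis $\Pr[\RR(X)\in[98/100,100/98]]\ge 24/25$ is naturally read under the distribution of $X$ conditioned on $\E$. If it is instead read under the mixture, it still gives $\Pr[X\in G\mid\E]\ge 1-2/25$, which suffices; see the arithmetic below. For $x\in G$ we have $\Pr[x\mid\E']\ge \frac{98}{100}\Pr[x\mid\E]$, and trivially $\Pr[x\mid\E]\ge\frac{98}{100}\Pr[x\mid\E]$. Hence
\[
\min\{\Pr[x\mid\E],\Pr[x\mid\E']\}\ge \tfrac{98}{100}\Pr[x\mid\E].
\]
Discarding the terms with $x\notin G$ gives
\[
\mathrm{err}\ge \tfrac12\cdot\tfrac{98}{100}\Pr[X\in G\mid\E].
\]

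\textbf{Step 3: arithmetic, and the main obstacle.} Under the conditional reading, $\mathrm{err}\ge \frac12\cdot 0.98\cdot 0.96\approx 0.47\ge 1/3$. Under the mixture reading, $\frac12\cdot 0.98\cdot 0.92\approx 0.45\ge 1/3$. Either way the claim follows. The only delicate point is pinning down which measure the probability in the hypothesis refers to, and checking that the constants survive the worse reading; they do with room to spare. Everything else is the standard Bayes-overlap identity.
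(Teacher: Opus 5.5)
Your proof is correct, but it takes a different route from the paper's.

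\textbf{The paper's argument.} It works pointwise with the posteriors $q=\Pr[\E\mid X]$ and $q'=\Pr[\E'\mid X]$. Because the priors are equal, $1/q=1+\RR(X)$. So on the good event both posteriors are at most $100/198$, and any classifier is correct there with conditional probability at most $100/198$. The bad event contributes at most $1/25$, so the success probability is at most $1/25+100/198<2/3$. This reads the hypothesis under the joint (mixture) distribution and uses both sides of the ratio condition. It bounds any classifier directly, without the overlap identity.

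\textbf{Your argument.} You use the exact Bayes error $\frac12\sum_x\min\{\Pr[x\mid\E],\Pr[x\mid\E']\}$ and lower-bound the overlap on $G$ by $\frac{98}{100}\Pr[x\mid\E]$. This gains two things. First, you only need the one-sided condition $\RR(X)\ge 98/100$ on most of the $\E$-mass. Second, you show the conclusion holds whichever measure the hypothesis refers to, since passing from the mixture to the $\E$-conditional costs only a factor $2$ in the failure mass.

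Both arguments give an error bound of roughly $0.45$, comfortably above $1/3$.
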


\begin{proof} Let $q=\Pr[\E\mid X]$ and $q'=\Pr[\E'\mid X]$. Conditioned on $98/100\leq \RR(X)\leq 100/98$, we have $1/q=(q+q')/q=1+q'/q=1+\frac{\Pr[\E'\mid X]}{\Pr[\E\mid X]}=1+\frac{\Pr[X \mid \E']}{\Pr[X \mid \E]}=1+\RR(X)\in [198/100,198/98]$. Hence, $q\leq 100/198$. Similarly, $q'\leq 100/198$. The probability of a correct classification is $\Pr[\text{correct classification}~|~98/100\leq \RR(X)\leq 100/98]\cdot \Pr[98/100\leq \RR(X)\leq 100/98]+\Pr[\text{correct classification}~|~\RR(X)\notin [98/100,100/98]] \cdot \Pr[\RR(X)\notin [98/100,100/98]]\leq 1/25 + 100/198< 2/3$. \end{proof}

Let the instance $I(A,W)$ be defined as follows. The set $A$ of $n$ input elements can be written as $A = A_1 \uplus \ldots \uplus A_t$, where for any $i\in [t]$, $A_i\subseteq A$ denotes the subset of $n_i$ elements with weight $w_i$ and $t\in [n]$. We refer to each $A_i$ as a {\em block} in $A$. Note that $\sum_{i=1}^t n_i = n$ and $\sum_{i=1}^t n_i \cdot w_i = W$. Given such an instance $I(A, W)$, we create another instance $I'(A', W')$ as follows. For each $i\in [t]$, we include all elements of $A_i$ in $A'$, and add $\eps n_i$ new elements of weight $w_i$ in $A'$. Let $n_i'= n_i + \eps n_i$. The total number of elements in $A'$ is $\sum_{i=1}^t n_i'= (1+\eps)n$, and the total weight of elements in $A'$ is given as $W' = \sum_{i=1}^t w_i n_i' = (1+\eps)W$. Let $k_i=\frac{n_i p_i^2}{\norm{p}_2^2}$. Note that we allow the new elements to be added fractionally, when $\eps n_i$ is a fraction.\footnote{When $\eps n_i$ is a fraction, we assume an element can be added fractionally. The weight of the element remains $w_i$, however, the probability of sampling the element will be proportional to $\eps w_i$.} Next, we state the main claim that we prove in this section. 

\begin{thm}\label{lbmain} Let $0<\eps\leq 1$ be such that $\eps\geq 1/\sqrt{n_i}$ for all $i$ and let $t$ be a constant. Then, any $(\eps/3,1/3)$-estimate of $W$ requires $\Omega(\frac{1}{\eps \norm{p}_2})$ proportional samples. \end{thm}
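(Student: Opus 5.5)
We reduce the statement to Claim~\ref{claim:bayes}. The adversary picks $I$ (event $\E$) or $I'$ (event $\E'$) with probability $1/2$ each. Since $W'=(1+\eps)W$, an $(\eps/3,1/3)$-estimate would separate the two instances: the intervals $(1\pm\eps/3)W$ and $(1\pm\eps/3)W'$ are disjoint for $\eps\le 1$. So it suffices to show that for $m=o(\frac{1}{\eps\norm{p}_2})$ we have $\Pr[98/100\le\RR(F_m)\le 100/98]\ge 24/25$ under $\E$; a symmetric statement is used under $\E'$ if needed.

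\textbf{Step 1: write the likelihood ratio.} Fix a fingerprint. Let $d_i$ be the number of distinct elements of block $A_i$ in it, and let $c_{i,1},\ldots,c_{i,d_i}$ be their multiplicities. Each individual element has the same sampling probability in both instances up to normalization: $w_i/W$ versus $w_i/W'$. Hence the product of per-element probabilities contributes the factor $(W/W')^m=(1+\eps)^{-m}$. The labelled counts differ in the number of ways to choose which distinct elements appeared, namely $\prod_i \binom{n_i'}{d_i}$ versus $\prod_i\binom{n_i}{d_i}$ (times a common multinomial). This gives
\[
\RR(F_m)=(1+\eps)^{-m}\prod_{i=1}^t\prod_{j=0}^{d_i-1}\frac{(1+\eps)n_i-j}{n_i-j}.
\]
This is the block-wise analogue of the computation behind Lemma~\ref{lem:43BT2024}. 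Taking logarithms, each factor is $\ln(1+\eps)+O(\eps j/n_i)$, so
\[
\ln\RR=-(m-D)\ln(1+\eps)+O\Big(\sum_i \eps d_i^2/n_i\Big),\qquad D=\sum_i d_i .
\]
The fractional-element convention is handled by the same formula with Gamma functions.

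\textbf{Step 2: bound the two error terms.} Note that $m-D$ is at most the number of collisions in the sample. The expected number of colliding pairs is $\binom m2\norm{p}_2^2$, which is $o(1/\eps^2)$ when $m=o(1/(\eps\norm{p}_2))$. Then $(m-D)\ln(1+\eps)\le \eps(m-D)$, and its expectation is $o(1/\eps)\cdot \eps$. This is not yet enough: we need it $o(1)$, which requires expected collisions $o(1/\eps)$. So I will instead center by its mean. Split $m-D$ as $\EE[m-D]$ plus a fluctuation. The fluctuation is controlled by Efron--Stein (Theorem~\ref{thm:es}): changing one sample changes $D$ by at most $1$, so the variance of $D$ is at most the expected number of collisions, i.e.\ $O(m^2\norm{p}_2^2)$. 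Multiplying the standard deviation by $\eps$ gives $o(1)$. The deterministic parts must then be matched. Using $d_i\le m$, one has $\EE[d_i]=n_i(1-(1-p_i)^m)$ with per-element probability $p_i$. Expanding to second order, the mean term $-\eps(m-\EE D)\approx-\eps\binom m2\norm{p}_2^2$ must cancel against the $\sum_j \eps j/n_i$ correction, which is $\approx \eps\sum_i d_i^2/(2n_i)$. With $d_i\approx m n_ip_i$, this becomes $\eps m^2\sum_i n_ip_i^2/2=\eps m^2\norm{p}_2^2/2$, where $k_i$ records the block's share. So the leading terms cancel exactly. The residuals are of order $\eps^2 m$ (from $\ln(1+\eps)$ versus $\eps$), and of order $\eps m^3\sum_i n_i p_i^3$ and $\eps\, d_i^3/n_i^2$ (third-order terms).

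\textbf{Step 3: control the residuals.} The residual $\eps^2 m$ is $o(1)$ only if $m=o(1/\eps^2)$. Since $\norm{p}_2\ge 1/\sqrt n$, we only get $1/(\eps\norm{p}_2)\le \sqrt n/\eps$. This is where $\eps\ge 1/\sqrt{n_i}$ enters, combined with $t$ constant. The term $\eps^2 m$ in fact cancels too if one keeps the exact $\ln(1+\eps)$ per new distinct element. The genuinely leftover term is the one coming from deviations of each $d_i$. Per block, the conditional structure is exactly the uniform-weight instance of Lemma~\ref{lem:43BT2024} with $n_i$ elements and a random number $m_i\sim \mathrm{Bin}(m,n_ip_i)$ of samples. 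So I plan to condition on $(m_1,\ldots,m_t)$ and apply Lemma~\ref{lem:43BT2024} to each block, with accuracy tightened to $(98/100)^{1/t}$, which is possible since $t$ is constant. Its hypotheses are $m_i=o(\sqrt{n_i}/\eps)$ and $m_i=o(n_i)$. Both follow from $m_i\approx m n_ip_i$ and $m=o(1/(\eps\norm{p}_2))$, using $\sqrt{n_i}p_i\le\norm{p}_2$ and $\eps\ge 1/\sqrt{n_i}$ (which gives $m_i=o(n_i)$). The extra $(1+\eps)^{-m_i}$ normalization is the same in both models once the block counts are conditioned. The block counts $m_i$ themselves have identical distribution under $I$ and $I'$, since block masses are preserved. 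The main obstacle is making this conditioning rigorous: that the ratio factorizes over blocks given $(m_i)$, and that a union bound over $t$ blocks with a Chernoff bound (Lemma~\ref{lem:chernoff}) on $m_i$ keeps the total failure probability under $1/25$.
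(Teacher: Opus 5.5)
Your outer structure matches the paper's. You use the Bayes reduction through Claim~\ref{claim:bayes}, the ratio of Claim~\ref{claim:risk_as_product}, conditioning on the block counts $(m_1,\ldots,m_t)$ (identically distributed under $\E$ and $\E'$), factorization into $t$ uniform sub-problems, and a union bound over blocks. The gap is the per-block estimate, which is the real content of the theorem, and neither of your two attempts at it goes through.

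Your Step~2 expansion does not close. The residuals you list are not small on their own. For one block with $\eps=1/\sqrt{n_i}$ and $m_i=n_i/\log n_i$ (allowed, since $m_i=o(n_i)$), the term $\eps m_i^3/n_i^2$ equals $\sqrt{n_i}/\log^3 n_i\to\infty$. Such terms vanish only by cancelling against matching higher-order terms of $-(m_i-d_i)\ln(1+\eps)$, and you do not show this. Also, $\eps\ge 1/\sqrt{n_i}$ has nothing to do with an ``$\eps^2 m$'' residual: the $\ln(1+\eps)$ versus $\eps$ discrepancy multiplies $m-D$, not $m$, and is $O(\eps^2m^2\norm{p}_2^2)=o(1)$. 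The hypothesis is needed only to get $m_i=o(n_i)$.

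Your Step~3 fallback applies Lemma~\ref{lem:43BT2024} as a black box, but that lemma fixes accuracy $98/100$ and confidence $99/100$. A product of $t$ factors needs accuracy about $(98/100)^{1/t}$ with per-block failure probability $O(1/t)$. The lemma also requires $n$ large, which fails for small blocks. Saying this is ``possible since $t$ is constant'' amounts to reproving the lemma.

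The paper supplies that reproof, using an idea you never invoke: a monotone sandwich that avoids computing $\EE[\ln\RR]$ at all. For fixed $m_i=r$, $\RR(l_i)$ is increasing in $l_i$, and $\sum_{l\in[a_i,b_i]}\RR(l)\Pr[l_i=l\mid\E]=\Pr[l_i\in[a_i,b_i]\mid\E']\le 1$. So if $l_i$ lies in a window $[a_i,b_i]$ with probability at least $1-\frac{1}{200t}$ under both hypotheses, then $\RR(a_i)\le e^{1/(100t)}$ and symmetrically $\RR(b_i)\ge e^{-1/(100t)}$. Combine this with the step bound $\RR(l)/\RR(l-1)\le 1+2\eps$ (Lemma~\ref{lem:rel_a_i_b_i}, which is where $m_i=o(n_i)$ enters). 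Then on the window, $\RR(l_i)\in[e^{-1/(100t)-2\eps|b_i-a_i|},\,e^{1/(100t)+2\eps|b_i-a_i|}]$. The window itself comes from Efron--Stein, $\var[l_i\mid m_i=r]\le r^2/n_i$, together with the mean gap $O(\eps r^2/n_i)$ (Lemmas~\ref{lem:var-l} and~\ref{lem:exp-l}). Finally, $\eps\sum_i|b_i-a_i|=o(1)$ follows from $r=O(mn_ip_i)$ and Cauchy--Schwarz. You also need the case split of Lemma~\ref{lem:m_i_not_large}: blocks with $\EE[m_i]\le\frac{1}{10\sqrt t}$ have $m_i\le 1$ with high probability and contribute ratio exactly $1$. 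That case covers small blocks, where neither $m_i\approx mn_ip_i$ nor Lemma~\ref{lem:43BT2024} is available.
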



We defined $\RR(F_m)$ as the ratio of the conditional probabilities of observing the fingerprint $F_m$ given that we sample $m$ elements from instances $I'$ and $I$, respectively. We want to bound the probability of obtaining a particular fingerprint $F_m$, where $F_m = \{(c_1, w(a_1)), (c_2, w(a_2)), \ldots, (c_l, w(a_l))\}$, where $l$ denotes the number of distinct elements in the sample. Let $m_i$ denote the number of sampled elements of weight $w_i$ and $l_i$ denote the number of distinct sampled elements with weight $w_i$. Let an element $a_j$ of weight $w_i$ be sampled $c_{ij}$ times. We have $\sum_{i=1}^t \sum_{j=1}^{l_i} c_{ij}=\sum_{i=1}^t m_i=m$, and $\sum_{i=1}^t l_i=l$. Let $p$ and $p'$ denote the proportional sampling distributions for instances $I$ and $I'$. The probabilities of sampling an element $a_i\in A$ of weight $w_i$ from $I$ and $I'$ are $p_i$ and $p'_i$, and hence, the probabilities of observing a sample of weight $w_i$ from $I$ and $I'$ are given as $n_i p_i$ and $n'_i p'_i$, respectively. Let $\Pr[F_m \mid \E]$ and $\Pr[F_m \mid \E']$ denote the conditional probabilities of observing fingerprint $F_m$ from $I$ and $I'$, respectively. Then, we have

$$ \Pr[F_m \mid \E] = \prod\limits_{i=1}^t \left((p_i)^{m_i} \binom{n_i}{l_i} \binom{m_i}{c_{i1} \ldots c_{il_i}} l_i!\right)$$  

$$ \Pr[F_m \mid \E'] = \prod\limits_{i=1}^{t} \left((p_i')^{m_i} \binom{n_i+\eps n_i}{l_i} \binom{m_i}{c_{i1} \ldots c_{il_i}} l_i!\right)$$  

\begin{claim}\label{claim:risk_as_product}  $\mathcal{R}(F_m)=\frac{\Pr[F_m \mid \E']}{\Pr[F_m \mid \E]}=\prod\limits_{i=1}^t\left(\left(\frac{1}{1+\eps}\right)^{m_i-l_i}\cdot\prod\limits_{j=0}^{l_i-1} \left(1+\frac{\eps j}{(1+\eps)(n_i-j)}\right)\right)$. \end{claim}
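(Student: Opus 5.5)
The plan is to divide the two displayed expressions for $\Pr[F_m\mid\E']$ and $\Pr[F_m\mid\E]$ block by block. The multinomial coefficients $\binom{m_i}{c_{i1}\ldots c_{il_i}}$ and the factors $l_i!$ are identical in numerator and denominator, so they cancel. This leaves, for each block $i$,
\[
\left(\frac{p_i'}{p_i}\right)^{m_i}\cdot\frac{\binom{n_i+\eps n_i}{l_i}}{\binom{n_i}{l_i}} .
\]

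First I will compute $p_i'/p_i$. Every element of weight $w_i$ keeps its weight in $I'$, while the total weight becomes $W'=(1+\eps)W$. Hence $p_i'=w_i/W'=p_i/(1+\eps)$, and the first factor equals $(1+\eps)^{-m_i}$.

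Second, I will expand the binomial ratio as a falling-factorial product:
\[
\frac{\binom{(1+\eps)n_i}{l_i}}{\binom{n_i}{l_i}}=\prod_{j=0}^{l_i-1}\frac{(1+\eps)n_i-j}{n_i-j}.
\]
Here $\binom{(1+\eps)n_i}{l_i}$ is read via the falling factorial when $\eps n_i$ is fractional, matching the paper's convention on fractional elements. Each factor can then be rewritten as
\[
\frac{(1+\eps)n_i-j}{n_i-j}=(1+\eps)\cdot\frac{(1+\eps)(n_i-j)+\eps j}{(1+\eps)(n_i-j)}=(1+\eps)\left(1+\frac{\eps j}{(1+\eps)(n_i-j)}\right),
\]
using $(1+\eps)n_i-j=(1+\eps)(n_i-j)+\eps j$.

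Collecting the $l_i$ factors of $(1+\eps)$ from this product against $(1+\eps)^{-m_i}$ gives $(1+\eps)^{-(m_i-l_i)}$. Taking the product over $i\in[t]$ then yields exactly the claimed expression for $\mathcal{R}(F_m)$.

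The only delicate step is the algebraic rewriting of $\frac{(1+\eps)n_i-j}{n_i-j}$, which is easy to get wrong. I will verify it by checking the identity above directly.
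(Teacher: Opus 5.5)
Your proposal is correct and follows the paper's proof essentially line by line. You cancel the instance-independent combinatorial factors, use $p_i'=p_i/(1+\eps)$, expand the binomial ratio as $\prod_{j=0}^{l_i-1}\frac{(1+\eps)n_i-j}{n_i-j}$, and pull a factor $(1+\eps)$ out of each term. The only cosmetic difference is that the paper passes through the intermediate form $1+\frac{\eps n_i}{n_i-j}$ before factoring, whereas you use the identity $(1+\eps)n_i-j=(1+\eps)(n_i-j)+\eps j$ directly.
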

\begin{proof} 
\begin{equation*}
	\begin{aligned}
        	\mathcal{R}(F_m)=\frac{\Pr[F_m|\E']}{\Pr[F_m|\E]} 	& =\prod\limits_{i=1}^t\left(\frac{p_i'}{p_i}\right)^{m_i}\cdot \frac{\binom{n_i+\eps n_i}{l_i}}{\binom{n_i}{l_i}} \\
                							& = \prod\limits_{i=1}^t\left(\left(\frac{1}{1+\eps}\right)^{m_i}\cdot \prod\limits_{j=0}^{l_i-1}\frac{n_i+\eps n_i-j}{n_i-j}\right)\\
                							& =\prod\limits_{i=1}^t\left(\left(\frac{1}{1+\eps}\right)^{m_i}\cdot\prod\limits_{j=0}^{l_i-1}\left(1+\frac{\eps n_i}{n_i-j}\right)\right)\\
				                                        & = \prod\limits_{i=1}^t\left(\left(\frac{1}{1+\eps}\right)^{m_i}\cdot (1+\eps)^{l_i}\prod\limits_{j=0}^{l_i-1} \left(1+\frac{\eps j}{(1+\eps)(n_i-j)}\right)\right)\\
                                				        & = \prod\limits_{i=1}^t\left(\left(\frac{1}{1+\eps}\right)^{m_i-l_i}\cdot\prod\limits_{j=0}^{l_i-1} \left(1+\frac{\eps j}{(1+\eps)(n_i-j)}\right)\right)\\
            \end{aligned}
        \end{equation*}
\end{proof}


First, we show a bound on $m_i$. We show that with high probability either $m_i=o(\sqrt{n_i}/\eps)$ or $m_i\leq 1$. When $m_i=o(\sqrt{n_i}/\eps)$, using the assumption $\eps\geq 1/\sqrt{n_i}$, we get $m_i=o(n_i)$.

\begin{lem}\label{lem:m_i_not_large} Let $m=o(\frac{1}{\eps \norm{p}_2})$. Then, for all $i$, either $m_i=o(\frac{\sqrt{n_i}}{\epsilon})$ or $m_i<2$ with probability at least $1-\frac{1}{200t}$. \end{lem}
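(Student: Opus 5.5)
We need to show that with $m=o(1/(\eps\norm{p}_2))$ samples, each block count $m_i$ is either $o(\sqrt{n_i}/\eps)$ or less than $2$, with probability at least $1-1/(200t)$. The plan is to compute the expectation of each $m_i$ in terms of $\norm{p}_2$, then apply a concentration bound block by block and take a union bound over the $t$ blocks.

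\textbf{Expectation of $m_i$.} Each sample lands in block $A_i$ with probability $n_ip_i$, so $m_i\sim Bin(m,n_ip_i)$ and $\EE[m_i]=mn_ip_i$. Since $\norm{p}_2^2=\sum_j n_jp_j^2\geq n_ip_i^2$, we get $n_ip_i\leq \sqrt{n_i}\,\norm{p}_2$. Hence
\[
\EE[m_i]\leq m\sqrt{n_i}\,\norm{p}_2 = o\!\left(\frac{\sqrt{n_i}}{\eps}\right),
\]
using $m\norm{p}_2=o(1/\eps)$. So in expectation each $m_i$ is already of the desired order, and it remains to show it does not deviate much above this.

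\textbf{Concentration, split by the size of $\EE[m_i]$.}
\begin{itemize}
\item \emph{Case $\EE[m_i]\to\infty$.} Choose a target $R_i$ with $R_i=\omega(\EE[m_i])$ but still $R_i=o(\sqrt{n_i}/\eps)$; for example, the geometric mean of $\EE[m_i]$ and $\sqrt{n_i}/\eps$ works whenever $\EE[m_i]$ is nonconstant. Then the tail bound $\Pr[m_i\geq R]\leq 2^{-R}$ for $R\geq 2e\EE[m_i]$ from Lemma \ref{lem:chernoff} gives $\Pr[m_i\geq R_i]\leq 2^{-R_i}$. This failure probability is $o(1)$, hence below $1/(200t^2)$ for large parameters.
\item \emph{Case $\EE[m_i]$ bounded or tending to $0$.} Here we want the alternative $m_i<2$. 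We bound
\[
\Pr[m_i\geq 2]\leq \binom{m}{2}(n_ip_i)^2\leq \EE[m_i]^2,
\]
which is small whenever $\EE[m_i]$ is a small constant or vanishes.
\end{itemize}
Since $t$ is a constant, a union bound over the $t$ blocks then gives total failure probability at most $1/(200t)$.

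\textbf{Main obstacle: the intermediate regime.} The difficulty is when $\EE[m_i]$ is of constant order but $\sqrt{n_i}/\eps$ is also bounded. Then neither ``$m_i=o(\sqrt{n_i}/\eps)$'' nor ``$m_i<2$'' is automatic. To close this gap I would:
\begin{itemize}
\item use the hypothesis $\eps\geq 1/\sqrt{n_i}$, which gives $\sqrt{n_i}/\eps\leq n_i$, so the regimes are governed by $n_i$;
\item interpret the $o(\cdot)$ statements asymptotically in $1/(\eps\norm{p}_2)$, so that $m\norm{p}_2\eps\to 0$. Then $\EE[m_i]/(\sqrt{n_i}/\eps)\to 0$, and any $\EE[m_i]$ that does not tend to $0$ forces $\sqrt{n_i}/\eps\to\infty$, placing us in the first case with a growing target $R_i$;
\item otherwise $\EE[m_i]\to 0$, and the second bound gives $m_i<2$ with probability $1-o(1)$.
\end{itemize}
Making this dichotomy clean, e.g.\ by passing to a subsequence, is the step I expect to need the most care. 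Adjusting constants then yields the stated $1-1/(200t)$.
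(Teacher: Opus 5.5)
Your argument is correct and has the same skeleton as the paper's. You bound $\mu_i=\EE[m_i]=mn_ip_i\le m\sqrt{n_i}\norm{p}_2=o(\sqrt{n_i}/\eps)$ via $n_ip_i^2\le\norm{p}_2^2$. You then handle large $\mu_i$ by a Chernoff tail and small $\mu_i$ by $\Pr[m_i\ge 2]\le\binom{m}{2}(n_ip_i)^2\le\mu_i^2/2$.

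The difference is in the large case. The paper splits at the fixed constant $\mu_i=1/(10\sqrt{t})$ and uses a \emph{constant} deviation $\delta=10\sqrt{t}\log(200t)$, so $\Pr[m_i\ge(1+\delta)\mu_i]\le e^{-\delta\mu_i}\le 1/(200t)$. Since $(1+\delta)\mu_i=O(\mu_i)$ is automatically $o(\sqrt{n_i}/\eps)$, your ``intermediate regime'' never needs separate treatment. The probability bound then holds for every instance, with the $o(\cdot)$ entering only through the deterministic $\mu_i$.

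Your super-constant target $R_i=\omega(\mu_i)$ with the $2^{-R}$ tail only gives failure probability $o(1)$, which is what forces the subsequence argument. That argument does close: if $\mu_i\not\to 0$ along a subsequence, then $\sqrt{n_i}/\eps\to\infty$ there, and, e.g., $R_i=\sqrt{\max\{\mu_i,1\}\sqrt{n_i}/\eps}$ works. But it is heavier, gives the bound only for large parameters, and yields $m_i\le R_i$ with $R_i=\omega(\mu_i)$ rather than $m_i\le\max\{O(\mu_i),1\}$. The latter is the form the paper later invokes (``$r\le O(\mu_i)$'') when bounding $\sum_i|b_i-a_i|$. Your form would also be enough there, since with $t$ constant only $\eps r/\sqrt{n_i}=o(1)$ per block is needed, but that computation would have to be rewritten.

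The simplest repair with your own tools is to take $R_i=\max\{2e\mu_i,\log_2(200t)\}$ when $\mu_i\ge 1/(10\sqrt{t})$. Then $\Pr[m_i\ge R_i]\le 2^{-R_i}\le 1/(200t)$ and $R_i\le C_t\mu_i$ for a constant $C_t$, which recovers the paper's non-asymptotic statement.
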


\begin{proof} Let $Z_{ik}$ be a random variable such that $Z_{ik}=1$ if the $k$th sample is from $A_i$ and $0$ otherwise. Then, $\Pr[Z_{ik}=1]=n_ip_i$. Let $m_i=\sum_{k=1}^m Z_{ik}$ denote the total number of samples from $A_i$. Let $\mu_i=\EE[m_i]=mn_ip_i$. We analyze two cases based on $\mu_i> \frac{1}{10\sqrt{t}}$ and $\mu_i\leq \frac{1}{10\sqrt{t}}$. For $\mu_i>\frac{1}{10\sqrt{t}}$ and $\delta=10\sqrt{t}\log (200t)$, using Chernoff bound, we get $$\Pr[m_i\geq (1+\delta)\mu_i]\leq \left(\frac{e^{\delta}}{(1+\delta)^{(1+\delta)}}\right)^{\mu_i}\leq \left(\frac{e}{\delta}\right)^{\delta \mu_i}=\frac{1}{\left(\frac{\delta}{e}\right)^{\delta \mu_i}}=\frac{1}{e^{\delta \mu_i\ln(\delta/e)}}\leq \frac{1}{e^{\delta \mu_i}}\leq \frac{1}{200t}$$

For $\mu_i\leq \frac{1}{10\sqrt{t}}$, we get $\Pr[m_i\geq 2]\leq \binom{m}{2}n_i^2 p_i^2\leq \frac{\mu_i^2}{2}\leq \frac{1}{200t}$. Since $t$ is a constant, we have $m_i\leq \max\{O(\mu_i),1\}$ with probability at least $1-1/200t$. For $m=o(\frac{1}{\eps \norm{p}_2})$, we have $\mu_i=mn_ip_i=o\left(\frac{n_ip_i}{\eps \norm{p}_2}\right)=o\left(\frac{n_i}{\eps}\cdot\sqrt{\frac{k_i}{n_i}}\right)=o\left(\frac{\sqrt{n_i}}{\eps}\right)$, where $k_i=\frac{n_i p_i^2}{\norm{p}_2^2}\leq 1$. Hence, $m_i\leq \max\{o(\sqrt{n_i}/\eps),1\}$ with probability at least $1-1/200t$.  

\end{proof}

We condition on the event $\E_1$ that Lemma \ref{lem:m_i_not_large} holds simultaneously for all $i$, and we have $\Pr[\E_1]\geq 1-1/200$. Conditioned on $\E_1$, we have either $m_i<2$ or $m_i=o(\sqrt{n_i}/\eps)=o(n_i)$, since we assume $\eps\geq 1/\sqrt{n_i}$ for all $i$. For a fixed value of $m_i$, using Claim \ref{claim:risk_as_product}, we can write $\mathcal{R}(F_m)=\prod\limits_{i=1}^t \mathcal{R}(l_i)$, where $\mathcal{R}(l_i)=\left(\left(\frac{1}{1+\eps}\right)^{m_i-l_i}\cdot\prod\limits_{j=0}^{l_i-1} \left(1+\frac{\eps j}{(1+\eps)(n_i-j)}\right)\right)$. In order to show that $\RR(F_m)$ is close to $1$, it is sufficient to prove that $\RR(l_i)$ is very close to $1$ with high probability for all $i=1,2,\ldots,t$, and then use a union bound. To show this, we first determine a range for each $l_i$. Specifically, we define an interval $[a_i,b_i]$ for $l_i$ such that $\Pr[a_i \leq l_i \leq b_i] > 1-\frac{1}{200t}$. Since $\RR(l_i)$ is an increasing function of $l_i$, it follows that $\Pr[\RR(a_i) \leq \RR(l_i) \leq \RR(b_i)] \geq 1-\frac{1}{200t}$. In the following, we obtain an interval for $l_i\in [a_i,b_i]$ conditioned on $m_i=r$, for some number $r$. Note that conditioned on $m_i=r$, the random variables corresponding to the samples from the $i$th block are independent. Before fixing the values of $a_i$ and $b_i$, we bound the variance of $l_i$. A similar result has been proved as Lemma $4.2$ of \cite{BT2024}. We state the lemma below.

\begin{lem}\label{lem:var-l} Let $m_i=r$ denote the number of sampled elements of weight $w_i$ from the instance $I$. Let there be $l_i$ distinct elements of weight $w_i$ in the sample. Then, $\var[l_i] \leq r^2/n_i$. Similarly, when $r$ elements of weight $w_i$ are sampled from $A'$, the variance of the number of distinct sampled elements of weight $w_i$ will be at most $\frac{r^2}{(1+\eps)n_i}$. \end{lem}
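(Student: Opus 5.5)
We plan to prove the bound via the Efron--Stein inequality (Theorem~\ref{thm:es}). Condition on $m_i=r$. Then the $r$ samples from block $A_i$ are i.i.d.\ uniform over the $n_i$ elements of $A_i$, since all elements of $A_i$ have the same weight $w_i$. Write these samples as $X_1,\ldots,X_r$ and set $l_i=f(X_1,\ldots,X_r)$, the number of distinct values among them. Let $X'_k$ be an independent copy of $X_k$, and let $l_i^{(k)}$ denote $f$ evaluated with $X_k$ replaced by $X'_k$. Replacing one coordinate changes the number of distinct values by at most one, so $(l_i-l_i^{(k)})^2\in\{0,1\}$. Hence
\[
\EE[(l_i-l_i^{(k)})^2]=\Pr[l_i\neq l_i^{(k)}].
\]

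Next we bound $\Pr[l_i\neq l_i^{(k)}]$. A change in the count can only happen in one of two ways. Either $X_k$ equals some other $X_j$ with $j\neq k$, so that removing it does not reduce the count while adding $X'_k$ may increase it. Or $X'_k$ equals some $X_j$ with $j\neq k$. If neither event occurs, then $X_k$ and $X'_k$ are both new values, and the count is unchanged. A union bound over $j\neq k$ gives, for each of the two events, probability at most $(r-1)/n_i$. Therefore
\[
\Pr[l_i\neq l_i^{(k)}]\le \frac{2(r-1)}{n_i}.
\]

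Summing over $k$ and multiplying by the factor $1/2$ from Theorem~\ref{thm:es} yields
\[
\var[l_i]\le \tfrac12\cdot r\cdot\frac{2(r-1)}{n_i}\le \frac{r^2}{n_i}.
\]

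For instance $I'$ the argument is identical. Block $i$ of $A'$ has $n_i'=(1+\eps)n_i$ elements of equal weight, so conditioned on $m_i=r$ the samples are uniform over it, and the same computation gives $\var\le r^2/((1+\eps)n_i)$.

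The one delicate point is the fractional element added when $\eps n_i$ is not an integer. In that case the block samples are no longer exactly uniform: one element has reduced probability, and the other elements have probability $1/n_i'$ or slightly more. We will handle this by noting that the union-bound step only needs the collision probability between $X_k$ and a fixed $X_j$. That probability equals $\sum_a q_a^2$, which is at most $\max_a q_a \le 1/\lfloor n_i'\rfloor$. This bound is $1/n_i'$ up to a $1+O(1/n_i)$ factor, and that slack is absorbed by the gap between $r(r-1)$ and $r^2$ in the regime $r=o(n_i)$ where we apply the lemma. This adjustment is the main obstacle; apart from it the argument is a direct transcription of Lemma~4.2 of \cite{BT2024}.
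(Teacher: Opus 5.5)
Your proof is correct and is essentially the paper's argument. Both use Efron--Stein with $l_i-l_i^{(k)}\in\{-1,0,1\}$ and a union bound over collisions of $X_k$ or $X_k'$ with the other $r-1$ samples, giving $\EE[(l_i-l_i^{(k)})^2]\le 2(r-1)/n_i$ (the paper writes $2r/n_i$) and hence $\var[l_i]\le r^2/n_i$.

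Your remark on the fractional element is something the paper does not address, and it is easier than you fear. Since the paper needs $W'=(1+\eps)W$ exactly, the fractional element carries only its fractional mass. So every full element of block $i$ has conditional probability exactly $1/n_i'$, and $\sum_a q_a^2\le\max_a q_a=1/n_i'$ with no need to use the gap between $r(r-1)$ and $r^2$.
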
 
\begin{proof} Let $Z_{ij}$ be a $\{0,1\}$-valued random variable that takes value $1$ if an element $a_j$ of weight $w_i$ appears in the sample and $0$ otherwise. Then, $l_i = \sum_{j=1}^{n_i} Z_{ij}$ equals the number of distinct elements of weight $w_i$ in the sample. We use the Efron-Stein inequality (Theorem \ref{thm:es}) to bound the variance of $l_i$. Let $X_i$ denote the $i$th sample. We write $l_i = f(X_1, X_2, \ldots, X_{r})$. Let $X'_1,X'_2,\ldots,X'_{r}$ be an independent copy of $X_1,X_2,\ldots,X_{r}$, and let $l'_{ik} = f(X_1, X_2, \ldots, X_{k-1}, X'_k, X_{k+1}, \ldots, X_{r})$ denote the number of distinct sampled elements of weight $w_i$, where the $k$th sample $X_k$ has been resampled as $X'_k$. Then, we have

$$ l_i - l_{ik}' =      \begin{cases}
                                1 & X_k~\text{does not collide with any}~X_j~\text{for}~(j \neq k)~\text{and}~X_k'~\text{collides with}~X_{j'}~\text{for}~(j'\neq k), \\
                                0 & X_k, X_k'~\text{both collide or both are unique}, \\
                                -1 & X_k'~\text{does not collide with any}~X_j~\text{for}~(j\neq k)~\text{and}~X_k~\text{collides with}~X_{j'}~\text{for}~(j'\neq k).
                        \end{cases} $$

Hence, $\EE[(l_i - l_{ik}')^2] = \Pr[(l_i - l_{ik}')^2 = 1] = \Pr[(l_i - l_{ik}') = 1] + \Pr[(l_i - l_{ik}') = -1]$. Now,

\begin{equation*}
\begin{aligned}
& \Pr[(l_i - l_{ik}') = 1]\\
                          & =  \Pr\left[(X_k~\text{does not collide with any}~X_j~\text{for}~(j\neq k))~\cap~(X_k'~\text{collides with}~X_{j'}~\text{for}~(j'\neq k))\right] \\
                          & =  \Pr\left[X_k~\text{does not collide with any}~X_j~\text{for}~(j\neq k)\right] \cdot \Pr\left[X_k'~\text{collides with}~X_{j'}~\text{for}~(j'\neq k)\right] \\
                          & \leq 1 \cdot \Pr\left[\cup_{j'\neq k} (X_k'~\text{collides with}~X_{j'})\right] \\
                          & \leq \sum_{j' \neq k} \Pr[X_k'~\text{collides with}~X_{j'}] \\
                          & =  \sum_{j'\neq k}\Pr[\cup_{a\in A_i}(X_k'=X_{j'}=a)] \\
                          & \leq  \sum_{j'\neq k} \sum_{a\in A_i}\Pr(X_k'=X_{j'}=a)\\
                          & =  (r - 1)\cdot n_i\cdot 1/n_i^2 \\
                          & \leq r/n_i.
\end{aligned}
\end{equation*}

Similarly, we can show that $\Pr[(l_i - l_{ik}') = -1] \leq r/n_i$, and hence $\EE[(l_i - l_{ik}')^2] \leq 2\cdot r/n_i$. Then, by the Efron-Stein inequality, we get $ \var[l_i] \leq \frac{1}{2}\sum_{k=1}^{r} \EE[(l_i - l_{ik}')^2] \leq \frac{1}{2}\sum_{k=1}^{r} 2r/n_i = r^2/n_i$. \end{proof}

Next, we bound the difference between the expected number of distinct sampled elements when we sample $m_i=r$ elements of weight $w_i$ from $I$ and $I'$, respectively. We show that, conditioned on $m_i=r$, the difference between the expected number of distinct sampled elements when we sample $m_i$ elements of weight $w_i$ from instances $I$ and $I'$ is not large. We state the lemma below. 

\begin{lem}\label{lem:exp-l} Let $m_i$ denote the number of sampled elements of weight $w_i$, and let $l_i$ denote the number of distinct elements of weight $w_i$ in the sample. Then, $$\EE_{\E'}[l_i \mid m_i=r] - \EE_{\E}[l_i \mid m_i=r] \leq O\left(\eps \frac{r^2}{n_i}\right). $$ \end{lem}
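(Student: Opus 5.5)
Conditioned on $m_i=r$, the $r$ samples of weight $w_i$ are i.i.d.\ uniform over the $N$ elements of the block. Here $N=n_i$ under $\E$ and $N=(1+\eps)n_i$ under $\E'$; a fractional element is handled by assigning it proportionally smaller mass, and we ignore that detail. The plan is to write the expected number of distinct elements exactly and compare the two values of $N$. By linearity over indicators $Z_{ij}$, as in Lemma \ref{lem:var-l}, we have
\[
\EE[l_i\mid m_i=r]=N\left(1-\left(1-\tfrac1N\right)^r\right)=:g(N).
\]
The claim then reduces to showing $g((1+\eps)n_i)-g(n_i)=O(\eps r^2/n_i)$.

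First we expand $g$ via the missing-collision count. Write $g(N)=r-D(N)$, where $D(N)=r-N+N(1-1/N)^r$ is the expected number of ``lost'' samples. Bonferroni inequalities, or equivalently the alternating expansion $N(1-1/N)^r=\sum_{k\ge0}(-1)^k\binom rk N^{1-k}$, give $\binom r2\frac1N-\binom r3\frac1{N^2}\le D(N)\le\binom r2\frac1N$. Hence
\[
g((1+\eps)n_i)-g(n_i)=D(n_i)-D((1+\eps)n_i)\le \binom r2\frac1{n_i}-\binom r2\frac{1}{(1+\eps)n_i}+\binom r3\frac{1}{(1+\eps)^2n_i^2}.
\]
The first difference equals $\binom r2\frac{\eps}{(1+\eps)n_i}\le \eps r^2/(2n_i)$.

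The remaining term $\binom r3/n_i^2\le r^3/(6n_i^2)$ is the main obstacle, since it is not obviously $O(\eps r^2/n_i)$. It is at most $\frac{r^2}{n_i}\cdot\frac{r}{6n_i}$, so we need $r/n_i=O(\eps)$. Under $\E_1$ (Lemma \ref{lem:m_i_not_large}) we have $r=o(\sqrt{n_i}/\eps)$ or $r<2$, and in the case $r<2$ both expectations equal $r$, so the difference is $0$. In the other case, if $\eps\ge 1/\sqrt{n_i}$ gives only $r=o(n_i)$, this is not quite enough. So we instead avoid the crude Bonferroni remainder: we show directly that $D(N)$ is decreasing in $N$, with derivative controlled by $r^2/N^2$. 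Concretely, $D'(N)=-1+(1-1/N)^{r}+\frac rN(1-1/N)^{r-1}$, and $|D'(N)|\le \binom r2/N^2$ because it is the second-order Taylor remainder of $(1-x)^r$ at $x=1/N$, using $(1-x)^r\ge 1-rx$ type bounds. Integrating from $n_i$ to $(1+\eps)n_i$ then gives
\[
|D(n_i)-D((1+\eps)n_i)|\le \eps n_i\cdot \binom r2 /n_i^2=O(\eps r^2/n_i),
\]
with no constraint on $r$ beyond $r\le N$. This derivative route is the one to execute; the Bonferroni computation serves only as a sanity check on the leading term. Since the bound is one-sided and the difference is nonnegative, the stated inequality follows.
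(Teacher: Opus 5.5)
Your final argument, the derivative route, is correct, but it differs from the paper's proof.

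\textbf{The paper's route.} The paper expands both $n_i(1-(1-1/n_i)^r)$ and $(1+\eps)n_i\bigl(1-(1-\frac{1}{(1+\eps)n_i})^r\bigr)$ binomially and subtracts coefficientwise, so the $t$-th term carries the factor $(1+\eps)^{-(t-1)}-1$. It keeps the $t=2$ term, which is of order $\eps r^2/n_i$. It bounds the tail $t\ge 3$ by $\eps n_i\,(e^{r/n_i}-1-r/n_i-r^2/(2n_i^2))=O(\eps r^3/n_i^2)$. That tail is absorbed only by invoking $r=o(n_i)$, which comes from $\E_1$ together with $\eps\ge 1/\sqrt{n_i}$.

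\textbf{Your route.} You treat $g(N)=N(1-(1-1/N)^r)$ as a function of the block size and note that $g'(N)=\Pr[\mathrm{Bin}(r,1/N)\ge 2]\in[0,\binom r2/N^2]$. Integrating over $[n_i,(1+\eps)n_i]$ then gives exactly $g((1+\eps)n_i)-g(n_i)\le \binom r2\frac{\eps}{(1+\eps)n_i}$.

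\textbf{What your route buys.}
\begin{itemize}
\item It proves the lemma unconditionally, as stated: no restriction on $r$, no case split, and no appeal to $\E_1$.
\item It gives an explicit constant that matches the leading term.
\item It avoids a delicate step in the paper's chain, where $(1+\eps)^{-(t-1)}-1$ is replaced by $-\eps/2$ inside an alternating sum. That replacement is not valid termwise: for $t=2$ the coefficient $(-1)^{t+1}\binom r2 n_i^{-2}$ is negative, and already at $r=2$ the displayed inequality fails by a constant factor. This is harmless for the $O(\cdot)$ claim and easily repaired, but your monotone-integral argument has no sign bookkeeping at all.
\end{itemize}

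Your diagnosis that bounding the two Bonferroni expansions separately loses the factor $\eps$ on the cubic term is also right. The paper's coefficientwise subtraction and your derivative both capture the same cancellation.

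\textbf{Tidying the write-up.} Justify $|D'(N)|\le\binom r2/N^2$ concretely, rather than by ``$(1-x)^r\ge 1-rx$ type bounds.'' Either use the union bound over pairs for $\Pr[\mathrm{Bin}(r,1/N)\ge 2]$, or use the Lagrange remainder of $(1-y)^r$ expanded at $y=1/N$ and evaluated at $y=0$. You can then drop the Bonferroni detour and the $\E_1$ case analysis, which your final argument no longer needs.
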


\begin{proof} First, we bound the expected number of distinct sampled elements of weight $w_i$ when we sample $m_i$ elements from $I$. Let $Z_{ij}$ be a $\{0,1\}$-valued random variable that takes value $1$ if an element $a_j$ of weight $w_i$ appears in the sample and $0$ otherwise. Then, $l_i = \sum_{j=1}^{n_i} Z_{ij}$ equals the total number of distinct elements of weight $w_i$ in the sample. Conditioned on sampling $r$ elements from $A_i$, the probability of a particular element $a_j$ in $A_i$ is given as $(1-(1-1/n_i)^r)$. Hence, $\EE_{\E}[l_i \mid m_i=r]=n_i \left(1-(1-\frac{1}{n_i})^{r}\right)$. Using the binomial expansion, we get

$$ \EE_{\E}[l_i \mid m_i=r]=n_i \left(1-(1-\frac{1}{n_i})^{r}\right) = n_i \left(\frac{r}{n_i} - \binom{r}{2} \frac{1}{n_i^2}+ \binom{r}{3} \frac{1}{n_i^3} - \cdots + (-1)^{r+1} \frac{1}{n_i^r}\right).$$

Next, we obtain a similar expression for the expected number of distinct sampled elements of weight $w_i$ when we sample $r$ elements from $A'_i$. We have $\EE_{\E'}[l_i \mid m_i=r]=n_i(1+\eps) \left(1-(1-\frac{1}{n_i(1+\eps)})^{r}\right)$. Using the binomial expansion again, we obtain

\begin{equation*}
    \begin{aligned}
        \EE_{\E'}[l_i \mid m_i=r] &=n_i(1+\eps) \left(1-(1-\frac{1}{n_i(1+\eps)})^{r}\right)\\ 
        &= n_i(1+\eps) \left(\frac{r}{n_i(1+\eps)} - \binom{r}{2} \frac{1}{n_i^2(1+\eps)^2}+ \binom{r}{3} \frac{1}{n_i^3(1+\eps)^3} - \cdots + (-1)^{r+1} \frac{1}{n_i^r(1+\eps)^r}\right)
    \end{aligned}
\end{equation*}

We can write,
\begin{equation*}
    \begin{aligned}
        &\EE_{\E'}[l_i \mid m_i=r] - \EE_{\E}[l_i \mid m_i=r] \\
        & = (n_i+\epsilon n_i)\left(1-(1-\frac{1}{(1+\eps)n_i})^{r}\right)-n_i\left(1-(1-\frac{1}{n_i})^{r}\right)\\
        &= n_i(1+\epsilon) \left(\frac{r}{n_i(1+\epsilon)} - \binom{r}{2} \frac{1}{n_i^2(1+\epsilon)^2}+ \binom{r}{3} \frac{1}{n_i^3(1+\epsilon)^3} - \cdots + (-1)^{r+1} \frac{1}{n_i^r(1+\epsilon)^r}\right) \\
                & \hspace{0.5cm}- . n_i \left(\frac{r}{n_i} - \binom{r}{2} \frac{1}{n_i^2}+ \binom{r}{3} \frac{1}{n_i^3} - \cdots + (-1)^{r+1} \frac{1}{n_i^r}\right)\\
            & = n_i\cdot \sum_{t=2}^{r}(-1)^{t+1}\binom{r}{t}\frac{1}{n_i^t}\left(\frac{1}{(1+\epsilon)^{t-1}}-1\right)\\
            & \leq n_i\cdot \sum_{t=2}^{r}(-1)^{t+1}\binom{r}{t}\frac{1}{n_i^t}\left(\frac{-\epsilon}{2}\right) & \hspace{-4cm}\hfill [\text{since}~ \forall t\geq 2, \left(\frac{1}{(1+\epsilon)^{t-1}}-1\right)\leq -\frac{\epsilon}{2} ]\\
            &=  \frac{\epsilon n_i}{2}\cdot \sum_{t=2}^{r}(-1)^{t}\binom{r}{t}\frac{1}{n_i^t}\\
        & = \frac{\epsilon n_i}{2}\cdot \binom{r}{2}\frac{1}{n_i^2}+\frac{\epsilon n_i}{2}\cdot \sum_{t=3}^{r}(-1)^{t}\binom{r}{t}\frac{1}{n_i^t} \\
        & \leq \frac{\epsilon n_i r^2}{4n_i^2}+ \frac{\epsilon n_i}{2}\cdot \sum_{t=3}^{r}\binom{r}{t}\frac{1}{n_i^t}\\
        & \leq \frac{\epsilon r^2}{4 n_i}+ \frac{\epsilon n_i}{2}\cdot \sum_{t=3}^{\infty}\frac{r^t}{t!}\frac{1}{n_i^t}\\
        & = \frac{\epsilon r^2}{4 n_i}+ \frac{\epsilon n_i}{2}\cdot \left(e^{r/n_i} - 1 - \frac{r}{n_i} - \frac{r^2}{2n_i^2}\right)\\
        & = O\left(\frac{\epsilon r^2}{n_i}\right)+O\left(\epsilon n_i \frac{ r^3}{n_i^3}\right) & \hspace{-7cm} \hfill [\text{since}~\forall x<1,\; e^x-1-x-x^2/2=O(x^3)]\\
        & = O\left(\frac{\epsilon r^2}{n_i}\right) & \hspace{-1cm}\hfill [\text{since}~r=o(n_i)]
    \end{aligned}
\end{equation*}

\end{proof}

Let us set $a_i=\EE_{\E}[l_i|m_i=r]-20\sqrt{t} \cdot r/\sqrt{n_i}$ and $b_i=\EE_{\E'}[l_i|m_i=r]+20\sqrt{t} \cdot r/\sqrt{n_i}$, and show that the value of $l_i$ lies in this range $[a_i,b_i]$ with probability at least $1-\frac{1}{200t}$.
        \begin{equation*}
                \begin{aligned}
                        |b_i-a_i| &= \EE[l_i|\E']-\EE[l_i|\E]+40 \sqrt{t} \cdot r/\sqrt{n_i}\\
                         & \leq O(\eps r^2/n_i) + 40\sqrt{t} \cdot r/\sqrt{n_i}   &(\text{Using Lemma \ref{lem:exp-l}})\\
                \end{aligned}
        \end{equation*}

Using the above values for $a_i$ and $b_i$, we bound the probability that $l_i$ does not belong to the interval $[a_i,b_i]$.

        \begin{equation*}
                \begin{aligned}
                        \Pr[l_i>b_i] & = \Pr[l_i>\EE[l_i|\E']+20\sqrt{t} \cdot r/\sqrt{n_i}] \\
                & = \Pr[l_i-\EE[l_i|\E']> 20\sqrt{t} \cdot r/\sqrt{n_i}]\\
                & \leq \frac{\var[l_i]}{400t \cdot r^2/n_i}\\
                & \leq \frac{r^2/(1+\eps)n_i}{400t r^2/n_i} &(\text{Using Lemma \ref{lem:var-l}})\\
                & \leq \frac{1}{400t} 
        \end{aligned}
\end{equation*}

Similarly, we can show that $\Pr[l_i<a_i]\leq \frac{1}{400t}$. Hence, $\Pr[l_i\notin [a_i,b_i]]\leq \frac{1}{200t}$. Hence, $\Pr[l_i\in [a_i,b_i]]\geq 1-1/200t$. Now, $|b_i-a_i|=\EE_{\E'}[l_i|m_i=r]-\EE_{\E}[l_i|m_i=r]+40\sqrt{t} \cdot r/\sqrt{n_i}$. Let $\E_2$ denote the event that $l_i\in [a_i,b_i]$ for all $i$, and using a union bound, we have $\Pr[\E_2]\geq 1-1/200$. We condition on event $\E_2$ for the following. Since $a_i<l_i<b_i$, and $\RR(x)$ is an increasing function of $x$, we have $\RR(a_i)<\RR(l_i)<\RR(b_i)$ with probability at least $1-\frac{1}{200t}\geq e^{-1/100t}$, where we use the inequality $e^{-x}\leq 1-x/2$ for $x\in [0,1.59]$. First, we obtain an upper bound on $\mathcal{R}(b_i)$ and a lower bound on $\mathcal{R}(a_i)$. Since $\RR(l_i)$ is an increasing function of $l_i$, we can write 

$$\mathcal{R}(a_i) (1-1/200t) \leq \mathcal{R}(a_i) \sum_{j\in [a_i,b_i]} \Pr[l_i=j|\E] \leq \sum_{j\in [a_i,b_i]} \mathcal{R}(j) \Pr[l_i=j|\E] \leq \sum_{j\in [a_i,b_i]} \Pr[l_i=j|\E'] \leq 1$$

Hence, $\RR(a_i) \cdot e^{-1/100t}\leq \RR(a_i) \cdot (1-1/200t)\leq 1$, and we have $\RR(a_i)\leq e^{1/100t}$. Similarly,
        
$$ \frac{1}{\mathcal{R}(b_i)} (1-1/200t) \leq \frac{1}{\mathcal{R}(b_i)} \sum_{j\in [a_i,b_i]} \Pr[l_i=j|\E'] \leq \sum_{j\in [a_i,b_i]} \frac{1}{\mathcal{R}(j)} \Pr[l=j|\E'] \leq \sum_{j\in [a_i,b_i]} \Pr[l_i=j|\E] \leq 1$$

Hence, $\mathcal{R}(b_i) \geq (1-1/200t)\geq e^{-1/100t}$. At this point, we have an upper bound on $\mathcal{R}(a_i)$ and a lower bound on $\mathcal{R}(b_i)$. We need a lower bound on $\mathcal{R}(a_i)$ and an upper bound on $\mathcal{R}(b_i)$ for our analysis. We first obtain an upper bound on $\prod_{l_i=a_i+1}^{b_i} \frac{\mathcal{R}(l_i)}{\mathcal{R}(l_i-1)}$.

\begin{lem}\label{lem:rel_a_i_b_i} $\prod\limits_{l_i=a_i+1}^{b_i}\frac{\mathcal{R}(l_i)}{\mathcal{R}(l_i-1)}\leq e^{2\eps\cdot |b_i-a_i|}$. \end{lem}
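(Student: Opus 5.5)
We bound each factor of the telescoping product separately, using the explicit form of $\RR(l_i)$ from Claim \ref{claim:risk_as_product}. Here $m_i=r$ is held fixed, so $\RR(l_i)$ depends only on $l_i$. Passing from $l_i-1$ to $l_i$ does two things. It removes one factor $\frac{1}{1+\eps}$ from $\left(\frac{1}{1+\eps}\right)^{m_i-l_i}$, which multiplies $\RR$ by $(1+\eps)$. It also appends the factor with index $j=l_i-1$ to the inner product. Hence
\[
\frac{\RR(l_i)}{\RR(l_i-1)}=(1+\eps)\left(1+\frac{\eps (l_i-1)}{(1+\eps)(n_i-l_i+1)}\right).
\]

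Next we bound each ratio by $e^{2\eps}$. Use $1+\eps\le e^{\eps}$. For the second factor, we are conditioned on $\E_1$, so $l_i\le m_i=r=o(n_i)$, and therefore $\frac{l_i-1}{n_i-l_i+1}\le 1$ for large $n_i$. The second factor is then at most $1+\frac{\eps}{1+\eps}\le 1+\eps\le e^{\eps}$. In fact it is $1+o(\eps)$, but the crude bound suffices. Each ratio is thus at most $e^{2\eps}$.

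Multiplying over the $b_i-a_i$ indices $l_i=a_i+1,\ldots,b_i$ gives $e^{2\eps(b_i-a_i)}\le e^{2\eps|b_i-a_i|}$. If $a_i,b_i$ are not integers, we interpret the product over the integers in the range. The number of terms is then at most $\lceil b_i-a_i\rceil$, which is absorbed the same way. The edge case where $a_i$ is negative or exceeds $r$ is handled by truncating to $[\max(a_i,1),\min(b_i,r)]$. This only shortens the product, and each factor is at least $1$.

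The main obstacle is simply confirming that $n_i-l_i+1$ stays bounded away from $0$ relative to $l_i$ over the whole range $[a_i,b_i]$. This relies on $b_i\le r+O(\eps r^2/n_i)+40\sqrt{t}\,r/\sqrt{n_i}=o(n_i)$, which follows from Lemma \ref{lem:exp-l} and $r=o(n_i)$. The case $m_i<2$ is trivial, since the product is then empty or a single factor.
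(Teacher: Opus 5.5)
Your proposal is correct and follows essentially the same route as the paper. Both compute $\RR(l_i)/\RR(l_i-1)=(1+\eps)\bigl(1+\frac{\eps(l_i-1)}{(1+\eps)(n_i-l_i+1)}\bigr)$, use $\E_1$ (so $m_i<2$ or $m_i=o(n_i)$) to bound each factor by $e^{2\eps}$, and multiply over the terms. The only difference is that the paper expands the ratio to $1+\eps+\frac{\eps(l_i-1)}{n_i-l_i+1}\le 1+2\eps$ rather than bounding the two factors separately by $e^{\eps}$.

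One small tidy-up: for non-integer endpoints, the number of integers in $[a_i+1,b_i]$ is at most $\lfloor b_i-a_i\rfloor\le b_i-a_i$, not $\lceil b_i-a_i\rceil$, so nothing needs to be absorbed.
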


\begin{proof} We define $\mathcal{R}(l_i)=\left(\left(\frac{1}{1+\eps}\right)^{m_i-l_i}\cdot\prod\limits_{j=0}^{l_i-1} \left(1+\frac{\eps j}{(1+\eps)(n_i-j)}\right)\right)$, and we can write $\frac{\mathcal{R}(l_i)}{\mathcal{R}(l_i-1)}=(1+\eps)\left(1+\frac{\eps (l_i-1)}{(1+\eps)(n_i-l_i+1)}\right)=1+\eps +\frac{\eps (l_i-1)}{(n_i-l_i+1)}\leq 1+\eps + \frac{\eps m_i}{(n_i-m_i+1)}$. Conditioned on $\E_1$, we have either $m_i=o\left(\frac{\sqrt{n_i}}{\epsilon}\right)$ or $m_i<2$. When $m_i=o\left(\frac{\sqrt{n_i}}{\epsilon}\right)$, in conjunction with $\eps>\frac{1}{\sqrt{n_i}}$, we have $m_i=o(n_i)$. In either of the two cases above, we obtain the following bound. 

$$\frac{\mathcal{R}(l_i)}{\mathcal{R}(l_i-1)}\leq 1+\eps +\frac{\eps m_i}{(n_i-m_i+1)}\leq 1+\eps +\eps =1+2\eps\leq e^{2\epsilon}.$$ 

Hence, we have $\prod\limits_{l_i=a_i+1}^{b_i}\frac{\mathcal{R}(l_i)}{\mathcal{R}(l_i-1)}\leq \prod\limits_{l_i=a_i+1}^{b_i} e^{2\epsilon}\leq e^{2\eps |b_i-a_i|}$. 
\end{proof}
Using Lemma \ref{lem:rel_a_i_b_i}, we obtain $\RR(b_i)\leq e^{1/100t} \cdot e^{2\eps |b_i-a_i|}$ and $\RR(a_i)\geq e^{-1/100t} \cdot e^{-2\eps |b_i-a_i|}$. Conditioned on $\E_2$, and using $\mathcal{R}(F_m)=\prod\limits_{i=1}^t \mathcal{R}(l_i)$, we can write 
$$\prod\limits_{i=1}^t e^{-\frac{1}{100t}} \cdot e^{-2\eps |b_i-a_i|} \leq \RR(F_m) \leq  \prod\limits_{i=1}^t e^\frac{1}{100t} \cdot e^{2\eps |b_i-a_i|}$$
This can be simplified as follows.
\begin{equation} \label{eq:main} e^{-\frac{1}{100}\sum\limits_{i=1}^t 1/t} \cdot e^{-2\eps\sum\limits_{i=1}^t |b_i-a_i|} \leq \mathcal{R}(F_m) \leq e^{\frac{1}{100}\sum\limits_{i=1}^t 1/t} \cdot e^{2\eps\sum\limits_{i=1}^t |b_i-a_i|}.\end{equation}
We obtain an upper bound on $\sum\limits_{i=1}^t |b_i-a_i|$. Conditioned on $\E_1$, we have either $m_i\leq 1$ or $m_i=o(\sqrt{n_i}/\eps)$. When $m_i\leq 1$, we have $|b_i-a_i|=0$. For the following calculations, we consider blocks for which $m_i=O(\mu_i)=O(mn_ip_i)=o(\sqrt{n_i}/\eps)$, for $m=o(\frac{1}{\eps \norm{p}_2})$.
\begin{equation*}
    \begin{aligned}
        \sum_{i=1}^t |b_i-a_i|  &= \sum_{i=1}^t (\EE_{\E'}[l_i|m_i=r]-\EE_{\E}[l_i|m_i=r]) + \sum_{i=1}^t 40\sqrt{t} \cdot r/\sqrt{n_i}\\
        & = \sum_{i=1}^t O(\eps r^2/n_i) + 40\sqrt{t} \sum_{i=1}^t r/\sqrt{n_i} \qquad\qquad \quad(\text{using Lemma}~\ref{lem:exp-l})\\
        & = \sum_{i=1}^t O(\frac{\eps \mu_i^2}{n_i}) + 40\sqrt{t} \sum_{i=1}^t O(\mu_i/\sqrt{n_i})  \qquad \qquad (\text{Since}~r\leq O(\mu_i)~\text{using Lemma}~\ref{lem:m_i_not_large})\\
        & = \sum_{i=1}^t O(\eps m^2n_ip_i^2) + 40\sqrt{t}  \sum_{i=1}^t O(m\sqrt{n_i}p_i)\\
	&\leq C_1 \eps m^2 \sum_{i=1}^t n_i p_i^2 + 40C_2 m\sqrt{t} \sqrt{t} \sqrt{\sum_{i=1}^t n_i p_i^2} \quad \quad (\text{Cauchy-Schwarz inequality})\\
	&= C_1 \eps m^2 \norm{p}_2^2 + 40C_2 m t \norm{p}_2\\
        & = o(1/\eps)  \qquad \qquad \qquad \qquad \qquad \qquad \qquad \qquad \qquad(\text{For}~t=O(1), m=o(\frac{1}{\eps \norm{p}_2}))
    \end{aligned}
\end{equation*}

Using $\sum\limits_{i=1}^t |b_i-a_i|=o(1/\eps)$, we simplify Equation \ref{eq:main} as follows. $$e^{-\frac{1}{100}}\cdot e^{-o(1)} \leq \mathcal{R}(F_m) \leq e^{\frac{1}{100}}\cdot e^{o(1)}.$$

Simplifying, we get $e^{-\frac{2}{100}} \leq \mathcal{R}(F_m) \leq e^{\frac{2}{100}}$. This gives $\frac{98}{100}\leq \mathcal{R}(F_m)\leq \frac{100}{98}$. We bound the failure probability of the event $\mathcal{R}(F_m)\notin [98/100,100/98]$ using a union bound on the events $\E_1$ and $\E_2$. The failure probability is at most $1/200+1/200\leq 1/100$. Therefore, $\Pr[\mathcal{R}(F_m)\in [98/100,100/98]]\geq \frac{99}{100}$. Using Claim \ref{claim:bayes}, it follows that the Bayes classifier cannot succeed with probability at least $2/3$ in distinguishing whether the samples are from $I$ or $I'$. Hence, any algorithm that estimates of $W$ within a $(1\pm \eps)$ factor with probability at least $2/3$ requires $\Omega(\frac{1}{\eps \norm{p}_2})$ proportional samples on any instance $I(A,W)$. 

\subsection{Lower bound for large $\norm{p}_2$} \label{sec:lb-large} We show a $\Omega(\frac{1}{\eps^2 \norm{p}_2})$ lower bound for the sum estimation problem, when $\norm{p}_2$ is at least a constant and $\eps$ is not too small. We state the main claim below.
\begin{claim}\label{claim:large-norm} For $\eps>1/\sqrt{n}$, there exist instances $I(A,W)$ with $\norm{p}_2$ being a constant for which any $(\eps,1/3)$ estimator of the sum of the weights requires $\Omega(\frac{1}{\eps^2 \norm{p}_2})$ samples. \end{claim}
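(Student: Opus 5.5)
We reduce to Lemma~\ref{lem:44BT2024} (Lemma 4.4 of \cite{BT2024}) with a two-block instance in which one heavy element carries constant probability mass. Fix a constant $c\in(0,1/2)$, say $c=1/4$. Instance $I$ has one heavy element $a^*$ of weight $w^*$ and $n-1$ light elements of weight $1$, with $w^*$ chosen so that $p_{a^*}=w^*/W=c$. Instance $I'$ is the same except that the light block is enlarged by a factor $(1+\Theta(\eps))$, added fractionally as in the construction before Theorem~\ref{lbmain} if needed. Then the heavy element has probability $c'=c-\Theta(\eps)$ in $I'$, and $W'=(1+\Theta(\eps))W$. Scaling $\eps$ by a constant makes the two sums differ by more than a $(1\pm\eps)$ factor. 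In both instances $\norm{p}_2\ge c$, and $\norm{p}_2\le 1$, so $\norm{p}_2=\Theta(1)$ and the target bound is $\Omega(1/\eps^2)$.

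\textbf{Step 1 (factorize the fingerprint likelihood).} Condition on the number $l$ of samples equal to $a^*$. Given $l$, the remaining $m-l$ samples are uniform over the light block. So the ratio $\RR(F_m)$ factors as
\[
\frac{\Pr[l\mid c']}{\Pr[l\mid c]}\cdot \RR_{\mathrm{light}},
\]
where $\RR_{\mathrm{light}}$ is exactly the uniform-instance ratio of Lemma~\ref{lem:43BT2024}, with $n-1$ versus $(1+\Theta(\eps))(n-1)$ light elements and $m-l$ samples. The count $l$ is $\mathrm{Bin}(m,q)$ with $q\in\{c,c-\Theta(\eps)\}$.

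\textbf{Step 2 (bound each factor).} For the binomial factor, apply Lemma~\ref{lem:44BT2024} with $p=c$ and gap $\Theta(\eps)$. It gives ratio in $[98/100,100/98]$ with probability $99/100$ whenever $m=o(c/\eps^2)=o(1/(\eps^2\norm{p}_2))$. For the light factor, Lemma~\ref{lem:43BT2024} needs $m=o(\sqrt n/\eps)$ and $m=o(n)$. The hypothesis $\eps>1/\sqrt n$ gives $1/\eps^2<n$ and $1/\eps^2< \sqrt n/\eps$, so any $m=o(1/\eps^2)$ satisfies both conditions.

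\textbf{Step 3 (combine and conclude).} Take the intersection of the two good events, which has probability at least $98/100\ge 24/25$. Tighten the constants so the product of the two ratios lies in $[98/100,100/98]$, for instance by demanding each lies in $[e^{-1/100},e^{1/100}]$; the cited proofs give this for $m$ small enough. Then Claim~\ref{claim:bayes} shows that no classifier succeeds with probability $2/3$. Hence any $(\eps,1/3)$-estimator needs $\Omega(1/\eps^2)=\Omega(1/(\eps^2\norm{p}_2))$ samples.

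\textbf{Main obstacle.} Lemma~\ref{lem:44BT2024} is stated for "$p$ and thus $\eps$ small enough", but here $p=c$ is a constant. I would fix this by choosing $c$ a small absolute constant, which keeps $\norm{p}_2$ constant. If that is still insufficient, I would redo its short Gaussian/Chernoff computation for constant $p$: with $m=o(1/\eps^2)$, the log-likelihood ratio of binomials with biases differing by $\Theta(\eps)$ concentrates at $o(1)$.

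A second, smaller issue is the fractional-element convention combined with the integrality of $l$. It is handled exactly as in the proof of Theorem~\ref{lbmain}.
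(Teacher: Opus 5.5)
Your proposal is correct and follows essentially the same route as the paper. Both use one heavy element of constant mass plus a uniform light block that is enlarged in $I'$. Both factor the fingerprint likelihood ratio into a binomial count ratio, handled by Lemma~\ref{lem:44BT2024}, and a uniform light-block ratio, handled by Lemma~\ref{lem:43BT2024}, with $\eps>1/\sqrt{n}$ giving $m=o(\sqrt{n}/\eps)$ and $m=o(n)$; both then finish with the Bayes-classifier argument.

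The differences are cosmetic. You bound the light-sample count by $m-l\le m$ directly where the paper uses a Markov step, and you tighten constants to reuse Claim~\ref{claim:bayes} as stated, whereas the paper reruns that computation with the wider interval $[96/100,100/96]$.
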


\begin{proof} Let instance $I(A,W)$ have an element $a_0$ of weight $w_1$ and $(n-1)$ elements of weight $w_2$. We have $A=A_1\cup A_2$, where $A_1=\{a_0\}$. Let $n_1=|A_1|=1$ and $n_1 \cdot w_1= c_1 W$, for some constant $0<c_1<1$, and $n_2=|A_2|=n-1$ and $(n-1)w_2=c_2 W$, for some constant $c_2=1-c_1>0$. We can also write $n_1 p_1 = c_1$ and $n_2 p_2 = c_2$. Given such an instance $I(A, W)$, we create another instance $I'(A',W')$ as follows. We include $a_0$ and all elements of $A_2$ in $A'$ and add $\eps_2 n_2$ new elements of weight $w_2$ in $A'$, where we set $\eps_2=\frac{\eps}{c_2}$. Note that the total weight of elements in instance $I$ is $W$ and total weight of elements in $I'$ is $W'= W+\eps_2 n_2 w_2=W+\frac{\eps}{c_2}\cdot c_2 W=W+\eps W= (1+\eps)W$. As before, we assume the adversary chooses instance $I$ or $I'$ with equal probability and we have $\Pr[\E]=\Pr[\E']=1/2$. We define $q=\Pr[a\neq a_0|\E]$ and $q'=\Pr[a\neq a_0|\E']$.

\begin{align*} 
        q &= \frac{c_2 W}{W}=c_2\\
        q' &= \frac{c_2 W+\eps W}{W+\eps W}\leq \frac{c_2W+\eps W}{W}= q+\eps
\end{align*} 

This gives $|q-q'|\leq \eps$. Now, we define a random variable
    
\[X_i   =       \begin{cases}
                        1 & \text{Weight of sample is not equal to}~w_1 \\ 0 & \text{otherwise} 
                \end{cases}
\]

Then, $X=\sum_{i=1}^m X_i$ denotes the number of sampled elements with weight not equal to $w_1$. We define the ratio $\mathcal{R}(X)=\frac{\Pr[X|\E']}{\Pr[X|\E]}$. Since $q$ is a constant, we can write $m=o(1/\eps^2)=o(q/\eps^2)$. Then, using Lemma \ref{lem:44BT2024}, we get $\Pr\left[\frac{98}{100}\leq\mathcal{R}(X) \leq \frac{100}{98} \right]\geq \frac{99}{100}$. Let $F_X$ denote the fingerprint of the sampled elements other than $a_0$. From the given $F_X$, we can reconstruct the fingerprint of the whole sample $F_m$ by adding the tuple $(m-X,w(a_0))$. Hence, $\Pr[F_m|\E]=\Pr[F_X|\E]$ and $\Pr[F_m|\E']=\Pr[F_X|\E']$. Then,

\[\mathcal{R}(F_m)=\frac{\Pr[F_m|\E']}{\Pr[F_m|\E]}=\frac{\Pr[F_X|\E']}{\Pr[F_X|\E]}=\frac{\Pr[X|\E']\cdot \Pr[F_X|X,\E']}{\Pr[X|\E] \cdot \Pr[F_X|X,\E]}=\mathcal{R}(X)\cdot \mathcal{R}(F_X|X)\]

We compute $\mathcal{R}(F_X|X)$ conditioned on the event $\mathcal{L}=\{X\leq 30\EE[X]\}$. By Markov's inequality, we get $\Pr[\mathcal{L}]\geq 29/30$. Since $\Pr[\E]=\Pr[\E']=1/2$, we have 

\[\EE[X]=m(q+q')/2\leq m(q+\eps)\leq m\cdot \left(c_2+\eps\right)\leq 2m=o\left(\frac{1}{\eps^2}\right)=o\left(\frac{\sqrt{n}}{\eps}\right)\]

We have $X\leq 30 \EE[X]=o\left(\frac{\sqrt{n}}{\eps}\right)=o\left(\frac{\sqrt{n_2}}{\eps_2}\right)$. Conditioning on $X=x$ for some $x\leq 30\EE[X]$ and using Lemma \ref{lem:43BT2024}, we get

\[\Pr\left[\frac{98}{100}\leq \frac{\Pr[F_X|X=x,\E']}{\Pr[F_X|X=x,\E]}\leq \frac{100}{98}\right]\geq \frac{99}{100}\]

and integrating over $\mathcal{L}$, we obtain

\[\Pr\left[\frac{98}{100}\leq \frac{\Pr[F_X|X,\E']}{\Pr[F_X|X,\E]}\leq \frac{100}{98}\Bigg| \mathcal{L}\right]\geq \frac{99}{100}.\]

Then, conditioned on $\mathcal{L}$, we get $96/100\leq (98/100)^2\leq \mathcal{R}(F_m)\leq (100/98)^2\leq 100/96$. Then, using a union bound, the total failure probability can be bounded as follows.

\begin{equation*}
    \begin{aligned}
        \Pr\left[\mathcal{R}(F_m)\notin \left[\frac{96}{100},\frac{100}{96}\right]\right] & \leq \Pr[\mathcal{L}^c]+\Pr\left[\frac{\Pr[F_X|X,\E']}{\Pr[F_X|X,\E]}\notin \left[\frac{98}{100},\frac{100}{98}\right]\Bigg| \mathcal{L}\right]\\ & \hspace{2cm}+\Pr\left[\frac{\Pr[X|\E']}{\Pr[X|\E]}\notin \left[\frac{98}{100},\frac{100}{98}\right] \right] \\
        & \leq \frac{1}{30}+\frac{1}{100}+\frac{1}{100}\\
        & \leq \frac{6}{100}
    \end{aligned}
\end{equation*}

Then, $\Pr[96/100\leq \RR(F_m)\leq 100/96]\geq 94/100$. Using a similar argument as Claim \ref{claim:bayes}, we can say that the Bayes classifier taking $F_m$ and classifying $\E$ and $\E'$ has probability of error at least $1/3$. Hence, the Bayes classifier does not succeed with probability at least $2/3$ in distinguishing the instances. Hence, any algorithm that estimates $W$ within a $(1\pm \eps)$ factor with probability at least $2/3$ requires $\Omega\left(\frac{1}{\eps^2}\right)$ proportional samples. Since $p_1$ is a constant, $\norm{p}_2$ is also a constant. Hence, we can write lower bound as $\Omega\left(\frac{1}{\eps^2 \norm{p}_2}\right)$.
\end{proof}

\section{Conclusions} We provide an improved algorithm for sum estimation using proportional sampling that on an instance with proportional sampling distribution $p$, computes an $(\eps,\delta)$-estimate of the sum of the weights of the elements using $O((\frac{1}{\eps^2 \norm{p}_2}+\frac{\ln 1/\norm{p}_2}{\norm{p}_2})\log 1/\delta)$ proportional samples. We give an almost matching lower bound of $\Omega\left(\frac{1}{\eps \norm{p}_2}\right)$ proportional samples for any $(\eps,1/3)$-sum estimation problem, assuming for all $i\in [t]$, $\eps\geq 1/\sqrt{n_i}$ and $t$ is a constant. We also provide an algorithm that computes an $(\eps,\delta)$-estimate of the $h$-th moment of the weights of the elements using $O((\frac{h}{\norm{p}_h}+\frac{1}{\eps^2 \norm{p}_2}+\frac{\ln 1/\norm{p}_2}{\norm{p}_2}+\frac{1}{\eps^2 \norm{p}_h})\log 1/\delta)$ proportional samples. 

\section{Acknowledgment}
 Anup Bhattacharya acknowledges the support from the SERB, CRG grant (CRG/2023/002119).

\section{AI Usage disclosure} All the main ideas of the paper were generated by the authors. The paper is written entirely by the authors. We used Gemini 3.1 Pro and ChatGPT 5.6 Luna to go over our draft and report any inconsistencies. The responses were used to improve the writing of the paper, in particular, some parts of our sum estimation algorithm and the lower bound.

\bibliographystyle{alpha}
\bibliography{ref}


\end{document}